\definecolor{pku-red}{RGB}{139,0,18}
\pgfplotsset{compat=newest}
    \let\Cref\crtCref
    \let\cref\crtcref
\numberwithin{equation}{section}
\theoremstyle{plain}
\newtheorem{theorem}{Theorem}[section]
\newtheorem{lemma}[theorem]{Lemma}
\newtheorem{corollary}[theorem]{Corollary}
\newtheorem{proposition}[theorem]{Proposition}
\theoremstyle{definition}
\newtheorem{definition}[theorem]{Definition}
\theoremstyle{remark}
\newtheorem{remark}[theorem]{Remark}
\newcommand{\R}{\mathbb R}
\newcommand{\N}{\mathbb N}
\newcommand{\G}{\mathcal G}
\newcommand{\uf}{\mathfrak{uf}}
\DeclareMathOperator*{\fld}{fld}
\newcommand{\T}{\mathsf T}
\newcommand{\I}{\mathcal I}
\DeclareMathOperator*{\supp}{supp}
\DeclareMathOperator*{\lcm}{lcm}
\newcommand{\s}[1]{{(#1)}}
\newcommand{\norm}[1]{\left\lVert{#1}\right\rVert}
\title{Will AI Trade? A Computational Inversion of the No-Trade Theorem}
\author[1]{Hanyu Li\thanks{lhydave@pku.edu.cn}}
\author[1]{Xiaotie Deng\thanks{xiaotie@pku.edu.cn}}
\affil[1]{CFCS, School of Computer Science, Peking University}
\date{}
\begin{document}
\maketitle 
\begin{abstract}
Classic no-trade theorems attribute trade to heterogeneous beliefs. We re-examine this conclusion for AI agents, asking if trade can arise from computational limitations, under common beliefs. We model agents' bounded computational rationality within an unfolding game framework, where computational power determines the complexity of its strategy. Our central finding inverts the classic paradigm: a stable no-trade outcome (Nash equilibrium) is reached only when ``almost rational'' agents have slightly different computational power. Paradoxically, when agents possess identical power, they may fail to converge to equilibrium, resulting in persistent strategic adjustments that constitute a form of trade. This instability is exacerbated if agents can strategically under-utilize their computational resources, which eliminates any chance of equilibrium in Matching Pennies scenarios. Our results suggest that the inherent computational limitations of AI agents can lead to situations where equilibrium is not reached, creating a more lively and unpredictable trade environment than traditional models would predict.    
\end{abstract}
\section{Introduction}
A foundational economic question is what drives trade. The classic ``no-trade'' theorem~\cite{milgromInformationTradeCommon1982}, stemming from Aumann's Agreement Theorem~\cite{aumannAgreeingDisagree1976}, posits that rational agents with common priors should not trade based on private information, as any agreed-upon price would reveal enough to nullify the trade's appeal for one party. This information-theoretic view suggests trade arises from heterogeneous beliefs~\cite{morrisTradeHeterogeneousPrior1994}.

However, the rise of sophisticated AI agents participating in economic settings prompts a re-examination of this classic result~\cite{openaiGPT4TechnicalReport2024,geminiteamgoogleGemini25Pushing2025,guoDeepSeekR1IncentivizesReasoning2025}. While AI agents can process vast amounts of information, they are ultimately bound by finite computational resources. This introduces a different dimension of rationality --- \emph{computational rationality} --- which is distinct from the informational assumptions underlying the no-trade theorems~\cite{halpernAlgorithmicRationalityGame2015,neymanFinitelyRepeatedGames1998,rubinsteinFiniteAutomataPlay1986}. 

This paper explores the ``Will AI trade?'' question through the lens of computational boundedness. We ask: if we assume agents are informationally aligned (i.e., they share common priors), can limitations in their computational power create conditions for trade?

We model this multi-agent environment using an ``unfolding game,'' a framework that reinterprets a classic single-shot normal-form game as an infinite interaction, thereby creating a setting to analyze computational effort. In this model, strategies are represented by infinite action sequences and payoffs by their long-term average, allowing us to define an agent's ``rationality level'' by the complexity of the sequence it can produce.

Our initial analysis assumes agents always exert their maximum computational power, an ``arms race'' mentality. Our focus is on ``almost rational'' agents --- those with immense and growing computational power, mirroring the current trajectory of AI development. We are interested in the conditions under which these agents can reach a stable outcome, defined as a Nash equilibrium, where no agent has an incentive to unilaterally change its strategy --- a ``no-trade'' state as is demonstrated in the no-trade theorem~\cite{milgromInformationTradeCommon1982}.

Our central finding inverts the classic no-trade paradigm. We show that:
\begin{enumerate}
    \item If almost rational agents have \textbf{slightly different} computational power, they reach a stable Nash equilibrium (a ``no-trade'' state).
    \item If these agents have \textbf{perfectly identical} computational power, they could fail to reach any stable equilibrium in some games, leading to persistent strategic adjustments akin to trade.
\end{enumerate}

This result suggests that from a computational perspective, it is \textbf{homogeneity}, not heterogeneity, that can lead to the absence of equilibrium and thus drive trade.

We then relax this ``arms race'' assumption of maximum computational power and consider the case where agents can strategically ``pretend'' to be computationally weaker than they are. As an example, we show that in the Matching Pennies game, this flexibility completely undermines the potential for any Nash equilibria, i.e., stable outcomes. If an agent can choose to under-utilize its power, the ability to predict its behavior breaks down, making it impossible to guarantee convergence to a Nash equilibrium, and thus trade happens.

Our findings offer a new, computation-driven model for understanding when trade might occur due to the absence of equilibrium, contributing to AI economics. We demonstrate that even with informational alignment, computational rationality's architecture fundamentally alters classic economic predictions. This highlights that for large-scale agent interactions, such as on the Internet, computational constraints are as crucial as informational ones. Paradoxically, the pursuit of greater, more uniform AI computational power might not always lead to stable no-trade outcomes but instead create conditions where equilibrium is not reached, fostering new forms of ongoing interaction that resemble trade.
\section{Unfolding Games: A Model for Computationally Bounded Agents}

Classic no-trade theorems are rooted in Bayesian games, where trade is driven by informational heterogeneity --- differences in priors, or ``types''~\cite{harsanyiGamesIncompleteInformation1967,harsanyiGamesIncompleteInformation1968,harsanyiGamesIncompleteInformation1968a}. To isolate the role of \emph{computational} limits, we remove informational asymmetry from the model. We begin with a standard two-player, finite normal-form game, which can be seen as a Bayesian game with a single state of the world and both players have the same unique type. Thus, a common prior belief structure is inherent. In this setting, where players have complete and identical information about the game itself, traditional theory predicts no incentive for trade between two players. Our goal is to show how computational constraints can lead to trade.

\subsection{From Single Decisions to Infinite Sequences}

We interpret a single-shot normal-form game as an infinitely repeated interaction. In this view, a player does not choose a single action or a probability distribution over actions (mixed strategy), but rather commits to an infinite sequence of pure actions. The player's mixed strategy is then interpreted by the long-run frequency of these actions (the ``frequentist interpretation of mixed strategies~\cite{hajekInterpretationsProbability2019,billotProbabilitiesSimilarityWeightedFrequencies2005}''). We formalize this using the concept of an ``unfolding game.''

\begin{definition}[Unfolding game]
Let $\G = (\{1, 2\}, \{S_1, S_2\}, \{u_1, u_2\})$ be a two-player normal-form game, where $S_i$ is player $i$'s finite set of pure actions and $u_i: S_1 \times S_2 \to \R$ is their payoff function.

The \textbf{unfolding} of $\G$ is a game $\G^\uf$ where each player $i$'s strategy space is $S_i^\infty$, the set of all infinite sequences of actions $s_i = (s_i(1), s_i(2), \dots)$. A strategy profile is a pair of such sequences $s = (s_1, s_2)$. The payoff to player $i$ for a strategy profile $s$ is the long-run average payoff:
\[
u_i^\uf(s) = \liminf_{n\to\infty} \frac{1}{n} \sum_{j=1}^n u_i(s_1(j), s_2(j)).
\footnote{The limit may not always exist. However, since the payoff function $u_i$ takes only a finite number of values, the lower limit always exists. We interpret this as the minimum payoff the player is guaranteed to secure.}
\]
Here, $s_k(j)$ is the action of player $k$ in round $j$.
\end{definition}

Thus, in the unfolding game $\G^\uf$, players choose \textit{at the beginning of the game} a strategy in the form of an infinite action sequence, which they then follow throughout the game.

Conversely, we can interpret any such infinite sequence as a mixed strategy in the original game $\G$ via a \textbf{folding} operation. 

\begin{definition}[Folding]
For a given strategy sequence $s_i$, its corresponding \textbf{folding} mixed strategy $\mu_i$ is the frequency vector of its actions, provided the limit exists:
\[
\mu_i(a) = \lim_{n \to \infty} \frac{1}{n}|\{j \in [1, n]: s_i(j) = a\}|, \quad \forall a \in S_i.
\]
Here, $\mu_i(a)$ is the probability that player $i$ plays action $a$ in the mixed strategy. A strategy profile $s = (s_1, s_2)$ in the unfolding game folds to a mixed strategy profile $\mu = (\mu_1, \mu_2)$ in the original game. 
\end{definition}
This relationship is summarized below.

\begin{table}[h]
\centering
\caption{Interpretation of Mixed Strategies and Payoffs}
\label{tab:interpretation}
\begin{tabular}{cc}
\toprule
\textbf{Unfolding Game $\G^\uf$} & \textbf{Original Game $\G$} \\ \midrule
Action Sequence $s_i$ & Mixed Strategy $\mu_i$ \\
Average Payoff $u_i^\uf(s)$ & Expected Payoff $u_i(\mu)$ \\ \bottomrule
\end{tabular}
\end{table}

The unfolding game framework is introduced as a technical device to rigorously analyze computational effort within the static, single-shot normal-form game. Unlike repeated games, it does not model agents adapting to observed histories, but rather provides a way to formalize the complexity of strategies agents can commit to at the outset.

\subsection{Defining Rationality as Computational Power}\label{sec:unfolding-game-rationality-definition}

We model players as computationally bounded agents. An agent's ``rationality'' is not a measure of their prior or information, but of their raw computational capacity to execute a complex strategy. We model this by restricting players' strategy spaces to \textbf{eventually repeated sequences}. Such a sequence takes the form $s = xyyy\dots$, where $x$ is a finite prefix and $y$ is a finite sequence that repeats infinitely. Actually, eventually repeated sequences are the only ones that can be produced by pseudo-random Turing machines with finite memory with game structure as input, the most generic model for AI agents. See \Cref{sec:eventually-repeated} for a more detailed discussion.

While traditional machine learning approaches are adaptive or update based on historical interactions, our model is inspired by a different but equally important class of AI agents, particularly Large Language Models (LLMs)~\cite{openaiGPT4TechnicalReport2024,geminiteamgoogleGemini25Pushing2025,guoDeepSeekR1IncentivizesReasoning2025}. During inference, an LLM has a fixed set of weights and generates a sequence of actions without updating its core parameters. Our model of a Turing machine with finite memory committing to a sequence is a powerful abstraction for this type of agent. It is a general model of computation that encompasses a vast range of algorithmic agents in their deployed, non-training phase.

In the long-run average payoff, any finite prefix becomes negligible. The complexity of a strategy is therefore reflected by the length of its repeating part, or its period. Our assumption is that executing a strategy with a longer, more intricate period requires greater computational resources (e.g., memory, processing power). This leads to our central definition of computational rationality.

\begin{definition}[Rationality level]
A player $i$ has \textbf{rationality level} $\tau_i \in \N$ if their strategy space, denoted by $A_i^\uf$, is restricted to eventually repeated sequences with a period of length $\tau_i$. That is, the player can only play strategies of the form $s_i = x y^{\infty}=xyyy\dots$, where $x$ is a finite prefix and $y$ is a sequence of length $\tau_i$ that repeats infinitely.
\end{definition}

\begin{remark}
    Such measure of rationality levels uses infinite sequences, which can be found in the literature on computationally bounded rationality in other scenarios~\cite{neymanFinitelyRepeatedGames1998,rubinsteinFiniteAutomataPlay1986,gutierrezImperfectInformationReactive2018,gilboaInfiniteHistoriesSteady1994,kalaiBoundedRationalityStrategic1990,gilboaBoundedUnboundedRationality1989}.
\end{remark}

We make a critical behavioral assumption that mirrors a technological ``arms race'': players are compelled to use their full computational power. An agent with rationality level $\tau_i$ \emph{must} play a strategy with period $\tau_i$; they cannot choose to play a simpler strategy of a shorter period. While this assumption would be ideal and ignore the complex reality of strategic decision-making, it allows us to cleanly investigate the consequences of agents' peak computational abilities. In \Cref{sec:strategic-under-utilization}, we will relax this assumption to explore the strategic implications of agents being able to \emph{pretend} to be computationally weaker than they are.

\section{Almost-Rational Agents and the Conditions for Trade}\label{sec:trade-condition}

Having established the unfolding game as a model for computational rationality, we now investigate the conditions under which computationally bounded agents would trade. The classical no-trade theorems conclude that in a Bayesian game, rational agents will not trade at Bayesian Nash equilibria (BNE)~\cite{harsanyiGamesIncompleteInformation1968,aumannAgreeingDisagree1976,milgromInformationTradeCommon1982}. In a BNE, agents possess private information (their ``type'') and use Bayesian updating to interpret others' actions. With common prior beliefs, any offer to trade reveals information about the offerer's type, leading rational agents to infer that the trade would not be beneficial, thus preventing agents from deviating from their current strategies, or in other words, no trade occurs.

A Bayesian game with a common prior and different states of the world is formally equivalent to a single, larger normal-form game. In this equivalent game:
\begin{itemize}
    \item A player's pure strategy is not just a single action, but a complete contingency plan (a function) that maps every possible state of the world to an action.
    \item The payoffs are the expected payoffs from the original Bayesian game, averaged over the common prior distribution of the world states.
\end{itemize}
Under this equivalence, a Bayesian Nash Equilibrium (BNE) in the original game corresponds precisely to a Nash Equilibrium (NE) in this larger normal-form game.

Thus, when considering normal-form games --- a special case of Bayesian games with a single state of the world --- Bayesian Nash equilibria (BNEs) degenerate to Nash equilibria (NEs)~\cite{nashNonCooperativeGames1951}. We consider NEs, where no player can unilaterally improve their payoff by changing strategies, as stable ``no-trade'' outcomes. Conversely, a failure to reach an NE implies ongoing strategic adjustments, a dynamic we interpret as trade driven by computational, rather than informational, pressures. An illustration of this correspondence is shown below.
\[
\text{Reach a Nash equilibrium} \iff \text{Trade does not occur}
\]

\subsection{Almost Identical Rationality}

We are particularly interested in ``almost rational'' agents --- those whose computational power is immense and continually growing, mirroring the trajectory of modern AI development. We formalize this by defining a player's rationality level not as a fixed number, but as a sequence of natural numbers tending to infinity.

\begin{definition}[Almost rationality]
    An agent $i$ is \textbf{almost rational} if their rationality level is a sequence $\{\tau_i(n)\}_{n=1}^\infty$ where $\lim_{n\to\infty} \tau_i(n) = +\infty$. At each stage $n$, the agent is bound to play a strategy with a period of exactly $\tau_i(n)$.
\end{definition}

To model the scenario of competing AIs developed with similar architectures and resources, we introduce a precise definition for when their computational powers are almost the same.

\begin{definition}[Almost identical rationality]
 Two almost rational agents with rationality levels $\{\tau_1(n)\}_{n=1}^\infty$ and $\{\tau_2(n)\}_{n=1}^\infty$ are \textbf{almost identical} if the relative difference in their computational power vanishes at infinity. That is,
 \[\lim_{n\to\infty}\frac{|\tau_1(n)-\tau_2(n)|}{\min\{\tau_1(n),\tau_2(n)\}}=0.\]
\end{definition}

This definition captures the intuition that while two powerful AIs might have different processing speeds or memory, these differences become insignificant relative to their massive total capacities. Crucially, this definition includes two opposite (while not complementary) cases that will be central to our analysis:
\begin{itemize}
    \item \textbf{Almost identical but always slightly different:} The agents have rationality levels that differ by a small, bounded amount for all but finitely many stages. For example, $\tau_1(n) = n$ and $\tau_2(n) = n + 1$ for all $n > N$.
    \item \textbf{Perfectly identical:} The agents have rationality levels that are exactly the same for all sufficiently large $n$. For example, $\tau_1(n) = n$ and $\tau_2(n) = n$ for all $n > N$.
\end{itemize}

\subsection{Nash Equilibria for Almost Rational Agents}

To analyze the equilibrium outcomes for almost rational agents in unfolding games, we need an appropriate notion of Nash equilibria that accommodates their evolving computational capabilities. As discussed earlier, a Nash equilibrium represents a stable outcome where no agent has an incentive to unilaterally deviate, thus corresponding to a ``no-trade'' situation. 

Standard Nash equilibrium applies to a game with fixed strategy spaces. However, in our model, as agents become ``more rational'' (i.e., as $n$ increases), their rationality levels $\tau_1(n)$ and $\tau_2(n)$ increase, and consequently, their available strategy spaces $A_1^\uf(n)$ and $A_2^\uf(n)$ in the unfolding game $\G^\uf(n)$ also change. We are interested in whether these agents can reach a stable outcome in the limit as their rationality tends to infinity.

To capture this, we consider a sequence of strategy profiles, $\{s^{(n)}\}_{n=1}^\infty$, where each $s^{(n)}$ is played in the unfolding game $\G^\uf(n)$ corresponding to the rationality levels $(\tau_1(n), \tau_2(n))$. We want to define an equilibrium concept that captures the idea that as $n \to \infty$ (and thus as agents become increasingly rational), these strategies $s^{(n)}$ become increasingly stable, meaning the incentive for any player to unilaterally deviate approaches zero.

We thus first define the notion of an $\epsilon$-approximate Nash equilibrium for fixed rationality levels as follows.

\begin{definition}[$\epsilon$-approximate Nash equilibrium for bounded rationality]
Consider a two-player normal-form game $\G = (\{1, 2\}, \{S_1, S_2\}, \{u_1, u_2\})$, where $S_i$ is player $i$'s finite set of pure actions and $u_i: S_1 \times S_2 \to \R$ is their payoff function. Let $\G^\uf$ be the unfolding game of $\G$ where player $i$ has rationality level $\tau_i \in \N$. Thus, player $i$'s strategy space in $\G^\uf$ is $A_i^\uf$, the set of all eventually repeated sequences with period $\tau_i$.

A strategy profile $s = (s_1, s_2)$ in $\G^\uf$ is called an \textbf{$\epsilon$-approximate Nash equilibrium} ($\epsilon$-NE) if for each player $i \in \{1, 2\}$,
\[
\max_{s_i' \in A_i^\uf} u_i^\uf(s_i', s_j) - u_i^\uf(s) \leq \epsilon,
\]
where $s_j$ is the strategy of the other player. Here, $u_i^\uf(s)$ is the long-run average payoff of player $i$ when playing strategy profile $s$ in the unfolding game $\G^\uf$.

\end{definition}

Using this definition, we can now define a Nash equilibrium for almost rational agents in the unfolding game.

\begin{definition}[Nash equilibrium for almost rational agents]\label{def:NE-for-almost-rational}
Consider a two-player normal-form game $\G = (\{1, 2\}, \{S_1, S_2\}, \{u_1, u_2\})$. Let $\{\tau_1(n)\}_{n=1}^\infty$ and $\{\tau_2(n)\}_{n=1}^\infty$ be the rationality levels of the two almost rational agents. For each $n$, let $\G^\uf(n)$ be the unfolding game where player $i$ has rationality level $\tau_i(n)$ and strategy space $A_i^\uf(n)$ (the set of all eventually repeated sequences with period $\tau_i(n)$).

A sequence of strategy profiles $\{s^{(n)}\}_{n=1}^\infty$, where $s^{(n)} = (s_1^{(n)}, s_2^{(n)})$ with $s_i^{(n)} \in A_i^\uf(n)$, is a \textbf{Nash equilibrium} if for each $n$, $s^{(n)}$ is an $\epsilon_n$-NE of $\G^\uf(n)$ with $\lim_{n\to\infty} \epsilon_n=0$.
\end{definition}

In essence, a sequence of strategies constitutes a Nash equilibrium for almost rational agents if, as their computational power grows indefinitely, the maximum possible gain any player could achieve by unilaterally changing their strategy in the corresponding unfolding game diminishes to zero. This captures the notion of achieving a stable, equilibrium-like outcome in the limit of growing bounded rationality.

\subsection{Main Result: Identical Rationality as a Driver of Trade}

Our main theorem delineates the conditions under which almost rational agents achieve stable ``no-trade'' outcomes (Nash equilibria). It reveals that the decisive factor is whether their rationality levels are almost identical but always slightly different.

\begin{theorem}\label{thm:main-result}
Consider two players with almost rationality levels $\{\tau_1(n)\}_{n=1}^\infty$ and $\{\tau_2(n)\}_{n=1}^\infty$ that are almost identical. The following statements are equivalent:
\begin{enumerate}
    \item The rationality levels of both players are almost identical but always slightly different, i.e., there exists an $N\in\N$, for every $n>N$, $\tau_1(n)\neq\tau_2(n)$.
    \item For every two-player finite normal-form game $\G$, the two players can reach a Nash equilibrium in the sense of \Cref{def:NE-for-almost-rational}.
\end{enumerate}
\end{theorem}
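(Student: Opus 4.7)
The plan is to prove the two implications separately, exploiting Nash's existence theorem and a Chinese-Remainder-Theorem style bookkeeping of the joint period for the forward direction, and using Matching Pennies as a witness game for the converse.

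For $(1) \Rightarrow (2)$, I would fix an arbitrary two-player finite game $\G$ and a mixed Nash equilibrium $(\mu_1^*, \mu_2^*)$ of $\G$ (which exists by Nash's theorem). For each $n$, let $g_n = \gcd(\tau_1(n), \tau_2(n))$, $\alpha_n = \tau_1(n)/g_n$, $\beta_n = \tau_2(n)/g_n$. The key number-theoretic observation is that over the joint period $L_n = g_n \alpha_n \beta_n$, each pair of positions $(k, \ell) \in [\tau_1(n)] \times [\tau_2(n)]$ with $k \equiv \ell \pmod{g_n}$ appears exactly once in the joint play (by the generalized CRT). I would then construct $s_i^{(n)}$ so that within each residue class mod $g_n$ (each of which contains $\tau_i(n)/g_n$ slots), action $a$ appears approximately $\mu_i^*(a) \cdot \tau_i(n)/g_n$ times. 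Combined with the CRT observation, this forces the joint folding of $(s_1^{(n)}, s_2^{(n)})$ to factor as $\mu_1^* \otimes \mu_2^*$ up to entrywise errors of order $1/\min(\alpha_n, \beta_n)$.

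A prerequisite is to verify $\min(\alpha_n, \beta_n) \to \infty$. Since $g_n$ divides the nonzero quantity $|\tau_1(n) - \tau_2(n)|$, we get $g_n \leq |\tau_1(n) - \tau_2(n)|$, hence $\min(\alpha_n, \beta_n) = \min(\tau_1(n), \tau_2(n))/g_n \geq \min(\tau_1(n), \tau_2(n))/|\tau_1(n) - \tau_2(n)|$, which diverges precisely because of the almost-identical hypothesis. This guarantees that the within-class approximation to $\mu_i^*$ can be made arbitrarily accurate. For the deviation analysis, any unilateral change to $s_i' \in A_i^\uf(n)$ by player $i$ still faces the residue-uniform $s_j^{(n)}$, so the same CRT argument implies the joint folding of $(s_i', s_j^{(n)})$ equals $\mu_i' \otimes \mu_j^*$ up to an $O(1/\min(\alpha_n, \beta_n))$ error, where $\mu_i'$ is the folding of $s_i'$. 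Since $\mu_i^*$ is already a best response to $\mu_j^*$ in $\G$, the net deviation gain is $O(1/\min(\alpha_n, \beta_n)) \to 0$, yielding a Nash equilibrium in the sense of \Cref{def:NE-for-almost-rational}.

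For $(2) \Rightarrow (1)$, I would argue the contrapositive using Matching Pennies as the witness game. Suppose $\tau_1(n) = \tau_2(n) =: \tau_n$ for infinitely many $n$. In Matching Pennies with $\tau_1 = \tau_2 = \tau$, given any strategy profile $s$, player $1$'s round-by-round best response (match player $2$) has period $\tau$ and is thus admissible, yielding $+1$ per round; symmetrically player $2$ guarantees $+1$ by mismatching. So if $s$ is an $\epsilon$-NE with long-run payoffs $p_1, p_2$, we get $p_1 \geq 1 - \epsilon$ and $p_2 \geq 1 - \epsilon$; but the zero-sum identity $p_1 + p_2 = 0$ then forces $\epsilon \geq 1$. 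Hence $\epsilon_n \geq 1$ infinitely often, which is incompatible with $\epsilon_n \to 0$.

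The main obstacle is the forward construction, in particular accounting for the two error sources: the rounding error when $\mu_i^*(a) \cdot \tau_i(n)/g_n$ is not an integer, and the propagation of this rounding into the deviation-gain bound via the CRT factorization. Both sources scale as $1/\min(\alpha_n, \beta_n)$, so the entire argument pivots on $\min(\alpha_n, \beta_n) \to \infty$ --- which is the true quantitative content of the ``almost identical but always slightly different'' hypothesis, and would fail catastrophically under $\tau_1(n) = \tau_2(n)$ (where residue classes collapse and the joint folding can no longer be made product).
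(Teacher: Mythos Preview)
Your proposal is correct and follows essentially the same route as the paper: the forward direction uses the residue-class decomposition modulo $g_n=\gcd(\tau_1(n),\tau_2(n))$ (what the paper calls \emph{bundle folding}) together with the divisibility bound $g_n\le|\tau_1(n)-\tau_2(n)|$ to force $\min(\alpha_n,\beta_n)\to\infty$ under the almost-identical hypothesis, and the reverse direction uses Matching Pennies with the per-round best-response deviation. The only cosmetic differences are that you build residue-uniformity directly into your construction (the paper instead uses block-constant ``simple melodies'' and proves approximate residue-uniformity a posteriori in \Cref{lemma:coprime-num-hetero-bundle}), and your Matching Pennies computation implicitly uses the $\pm 1$ normalization rather than the paper's $0/1$ version, so the correct identity is $p_1+p_2=1$ and the bound is $\epsilon\ge 1/2$ rather than $\epsilon\ge 1$; neither point affects correctness.
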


This theorem highlights a critical distinction for the conditions under which trade may occur. Interpreting the theorem, ``trade'' is possible (i.e., a Nash equilibrium may not be reached in some games) if agents are perfectly identical in their rationality levels, i.e., $\tau_1(n) = \tau_2(n)$ for all sufficiently large $n$.\footnote{Actually, the precise condition for the possibility of trade is that $\tau_1(n) = \tau_2(n)$ for infinitely many $n$.}

The core implication for the ``almost identical'' AIs central to our model is a direct contrast:
\begin{itemize}
    \item \textbf{Slight Heterogeneity $\Rightarrow$ No Trade:} If agents are almost identical and their rationality levels remain asymptotically different ($\tau_1(n) \neq \tau_2(n)$ for large $n$), a Nash equilibrium is always reachable. This ensures a stable, ``no-trade'' outcome.
    \item \textbf{Perfect Homogeneity $\Rightarrow$ Potential for Trade:} If agents are almost identical but their rationality levels become asymptotically the same ($\tau_1(n) = \tau_2(n)$ for large $n$), failure to reach a Nash equilibrium (and thus, ``trade'' as persistent strategic adjustment) is possible in some games.
\end{itemize}
This presents a computational inversion of classic no-trade theorems. Assuming agents always deploy their maximal computational power, it is \textit{perfect homogeneity} in computational capabilities, rather than heterogeneity, that can lead to the absence of equilibrium and thus drive trade.

\subsection{Correspondence with Equilibria in the Original Game}

One might argue that the unfolding game, $\G^\uf$, is a distinct scenario from the original normal-form game, $\G$, potentially limiting the applicability of our findings to classical game theory and no-trade theorems. However, we argue that our framework maintains a crucial correspondence. When the conditions for achieving a Nash equilibrium in the unfolding game are satisfied (as stated in \Cref{thm:main-result}), this stability translates meaningfully back to the original game. 

Specifically, any Nash equilibrium of the original normal-form game can be realized as the limit of a Nash equilibrium sequence in \Cref{def:NE-for-almost-rational}, both in terms of strategy profiles (via folding) and payoffs. The following theorem formalizes this connection.

\begin{theorem}[Correspondence of equilibria]\label{thm:correspondence}
Consider a two-player finite normal-form game $\G = (\{1, 2\}, \{S_1, S_2\}, \{u_1, u_2\})$. Suppose the rationality levels $\{\tau_1(n)\}_{n=1}^\infty$ and $\{\tau_2(n)\}_{n=1}^\infty$ of the two almost rational agents are almost identical but always slightly different (i.e., there exists an $N_0 \in \N$ such that for every $n > N_0$, $\tau_1(n) \neq \tau_2(n)$).

Then, for any Nash equilibrium $\sigma^* = (\sigma_1^*, \sigma_2^*)$ of the original game $\G$, there exists a sequence of strategy profiles $\{s^{*(n)}\}_{n=1}^\infty$, where $s^{*(n)} = (s_1^{*(n)}, s_2^{*(n)})$ with $s_i^{*(n)} \in A_i^\uf(n)$, such that:
\begin{enumerate}
    \item $\{s^{*(n)}\}_{n=1}^\infty$ is a Nash equilibrium for the almost rational agents in the sense of \Cref{def:NE-for-almost-rational}.
    \item For each player $i \in \{1,2\}$, the folding of $s_i^{*(n)}$, denoted $\mu_i^{(n)}$, converges to $\sigma_i^*$ as $n \to \infty$. That is, for every action $a \in S_i$, $\lim_{n\to\infty} \mu_i^{(n)}(a) = \sigma_i^*(a)$.
    \item For each player $i \in \{1,2\}$, the long-run average payoffs in the unfolding game converge to the expected payoffs in the original game's Nash equilibrium. That is, $\lim_{n\to\infty} u_i^\uf(s^{*(n)}) = u_i(\sigma^*)$.
\end{enumerate}
\end{theorem}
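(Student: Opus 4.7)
The plan is, given a Nash equilibrium $\sigma^*$ of $\G$, to construct approximating sequences $s_i^{*(n)} \in A_i^\uf(n)$ whose folding converges to $\sigma_i^*$ in a particularly strong way: the approximation must hold not only in the overall frequency but inside every arithmetic progression modulo $d_n := \gcd(\tau_1(n), \tau_2(n))$. Concretely, for each residue class $r \in \{0,1,\dots,d_n-1\}$ within player $i$'s period of length $\tau_i(n)$, one places each action $a \in S_i$ either $\lfloor \tau_i(n)\sigma_i^*(a)/d_n\rfloor$ or $\lceil \tau_i(n)\sigma_i^*(a)/d_n\rceil$ times, adjusting the rounding so the counts in that class sum to $\tau_i(n)/d_n$. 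The resulting per-class frequency of action $a$ then approximates $\sigma_i^*(a)$ with error at most $d_n/\tau_i(n)$.

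The reason such per-class equidistribution is the right tool is a Chinese Remainder Theorem decomposition of the joint sequence. The combined play has period $\lcm(\tau_1(n),\tau_2(n))$, and its positions split into $d_n$ ``tracks'' indexed by position modulo $d_n$. Within a single track, the two sub-sequences have coprime periods $\tau_1(n)/d_n$ and $\tau_2(n)/d_n$, so by CRT their joint positions cycle through every combination and the joint frequency of $(a,b)$ inside that track factors exactly as the product of the per-class frequencies. Averaging over the $d_n$ tracks, together with the per-class equidistribution in our construction, yields
\[
\mu_{12}(a,b) \to \sigma_1^*(a)\sigma_2^*(b),
\]
and hence $u_i^\uf(s^{*(n)}) \to u_i(\sigma^*)$ by linearity. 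This already delivers properties 2 and 3.

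For property 1 (the $\epsilon_n$-NE condition), the same track decomposition applies to any deviation $s_i' \in A_i^\uf(n)$. Only $s_j^{*(n)}$ needs to be equidistributed; the deviation's per-class sub-frequencies may be arbitrary. Since the average of those sub-frequencies across the $d_n$ tracks equals exactly the deviation's overall folding $\mu_i'$, the same calculation gives $u_i^\uf(s_i', s_j^{*(n)}) \approx u_i(\mu_i', \sigma_j^*) \leq u_i(\sigma^*)$, where the last inequality is the Nash property of $\sigma^*$ in $\G$. Subtracting from $u_i^\uf(s^{*(n)}) \approx u_i(\sigma^*)$ bounds the gain from any deviation by $O(d_n/\min(\tau_1(n),\tau_2(n)))$.

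The main obstacle --- and where the hypothesis earns its keep --- is showing this bound tends to zero. The relevant number-theoretic fact is $\gcd(\tau_1(n),\tau_2(n)) \mid |\tau_1(n)-\tau_2(n)|$, so $d_n \leq |\tau_1(n)-\tau_2(n)|$ whenever $\tau_1(n) \neq \tau_2(n)$. The ``always slightly different'' clause guarantees this inequality holds for all large $n$, while the ``almost identical'' clause $|\tau_1(n)-\tau_2(n)|/\min(\tau_1(n),\tau_2(n)) \to 0$ then forces $d_n/\min(\tau_1(n),\tau_2(n)) \to 0$. Choosing $\epsilon_n$ of this order verifies \Cref{def:NE-for-almost-rational} and completes the correspondence.
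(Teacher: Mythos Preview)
Your proposal is correct and rests on the same core machinery as the paper: decompose the joint period into residue classes modulo $d_n=\gcd(\tau_1(n),\tau_2(n))$ (the paper calls these \emph{bundles}), exploit that within each class the two sub-periods $\tau_i(n)/d_n$ are coprime so joint frequencies factor (the paper's Lemma~B.6/B.8), and then use the number-theoretic fact $d_n\mid|\tau_1(n)-\tau_2(n)|$ together with the almost-identical hypothesis to force $d_n/\min\{\tau_i(n)\}\to 0$ (the paper's Proposition~B.1, the ``almost coprime'' condition).

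The one genuine difference is the construction of $s_i^{*(n)}$. The paper uses \emph{simple melodies}---consecutive blocks $a_{i1}^{\nu_{i,1}(n)}\cdots a_{ir_i}^{\nu_{i,r_i}(n)}$ depending only on $\tau_i(n)$ and $\sigma^*$---and then proves via a counting argument (Lemma~B.9, Corollary~B.10) that such melodies are automatically per-class equidistributed up to at most $r_i-1$ heterogeneous bundles. You instead build per-class equidistribution directly, placing $\lfloor\tau_i(n)\sigma_i^*(a)/d_n\rfloor$ or $\lceil\cdot\rceil$ copies of each action in each residue class. Your route is more direct and sidesteps the heterogeneous-bundle bookkeeping; the paper's simple melodies have the compensating virtue that player $i$'s sequence does not depend on the opponent's rationality level $\tau_j(n)$, only on her own. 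Both land in the same place.
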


\Cref{thm:correspondence} ensures that any Nash equilibrium of the original static game can be realized as the limit of a stable sequence of equilibria in the unfolding game, under the ``almost identical but different'' condition. This serves as a sanity check that our unfolding framework faithfully captures the equilibrium structure of the original game.

\subsection{Proof Idea and Example}
\label{sec:proof-idea}

The proofs for our main findings (\Cref{thm:main-result} and \Cref{thm:correspondence}) are deeply connected and are proven together in \Cref{sec:counterpoint}. The core idea is that we can translate the equilibrium analysis from the complex unfolding game back to the original, simpler game. We do this by ``folding'' the players' repeating action sequences into familiar mixed strategies. To make this abstract idea concrete, we will use the classic game of Matching Pennies as an illustration.

In Matching Pennies, two players each choose to show either heads ($H$) or tails ($T$). Player 1, the row player, wins if their choices match, while player 2, the column player, wins if they differ. The payoff matrices are as follows:
\begin{equation}\label{eq:matching-pennies-RC}
    u_1:\begin{array}{c|cc}
            &H&T  \\\hline
         H& 1&0\\
         T&0&1\\
    \end{array}\qquad
    u_2:\begin{array}{c|cc}
            &H&T  \\\hline
         H& 0&1\\
         T&1&0\\
    \end{array}
\end{equation}

We will now explore two contrasting scenarios that reveal the intuition behind our main theorem.

\subsubsection*{Case 1: The ``No-Trade'' Harmony (Slightly Different Rationality)}

This case illustrates why slightly different computational power leads to a stable Nash equilibrium (a ``no-trade'' outcome).

Let's assume player 1 has a rationality level of $\tau_1(n) = n$ and player 2 has $\tau_2(n) = n+1$. For $n=2$, player 1 has a period of 2 and player 2 has a period of 3. Suppose player 1 plays the repeating sequence $(HT)^\infty$ and player 2 plays $(HTT)^\infty$.

\paragraph{Aligning the Rhythms} To analyze this interaction, we must find a common time horizon where both players' patterns align. The least common multiple of their periods (2 and 3) is 6. So, we expand their strategies to a 6-round cycle:
\begin{itemize}
    \item Player 1's strategy $(HT)^\infty$ becomes $(HTHTHT)^\infty$.
    \item Player 2's strategy $(HTT)^\infty$ becomes $(HTTHTT)^\infty$.
\end{itemize}
The play-by-play over this 6-round piece looks like this:
\[
\begin{array}{c|ccccccc|c}
\text{Round} & 1 & 2 & 3 & 4 & 5 & 6 & \dots & \text{Average Payoff} \\\hline
s_1 & H & T & H & T & H & T & \dots & \\
s_2 & H & T & T & H & T & T & \dots & \\\hline
u_1 & 1 & 1 & 0 & 0 & 0 & 1 & \dots & 3/6 = 1/2 \text{ } \\
u_2 & 0 & 0 & 1 & 1 & 1 & 0 & \dots & 3/6 = 1/2 \text{ } \\
\end{array}
\]

\paragraph{The Counterpoint Insight} The key observation lies in how the actions meet over this 6-round cycle. Because the periods 2 and 3 are coprime, every action in player 1's cycle encounters \emph{every} action in player 2's cycle \emph{exactly once}. For example, player 1's first action ($H$) plays against player 2's $H$, $T$, and $T$ over the cycle.

This perfect mixing means that from player 1's perspective, player 2's strategy effectively \emph{behaves like} a fixed mixed strategy. We can calculate this ``folding'' by looking at the frequency of actions in the 6-round cycle:
\begin{itemize}
    \item Player 1's folding: $(H: 3/6, T: 3/6) \rightarrow \sigma_1=(H:1/2, T:1/2)$.
    \item Player 2's folding: $(H: 2/6, T: 4/6) \rightarrow \sigma_2=(H:1/3, T:2/3)$.
\end{itemize}

\paragraph{Calculating the Profit from Deviation} Player 1 can now calculate their best response against player 2's folding mixed strategy $\sigma_2=(H:1/3, T:2/3)$. The best response to this is to always play $T$. If player 1 deviates to a strategy of $(TT)^\infty$, which behaves like the pure strategy $\sigma_1'=(H:0, T:1)$, their new payoff would be $2/3$.

Here we see the crucial trade-off: this limited potential gain is a direct consequence of the behavioral constraints imposed by our model. Having committed to using their full computational power, each agent must adhere to a fixed, repeating strategic pattern. This inflexibility --- the ``price'' of maximizing computational output --- confines them to a small set of strategic behaviors and prevents large, opportunistic gains from deviation.

The maximum extra payoff player 1 can gain is the difference between the new and old payoffs: $2/3 - 1/2 = 1/6$. Since player 2 is already playing her best response, this small, non-zero gain of $1/6$ means the strategies form a $1/6$-NE. 

More generally, this $1/6$ gap arises because the ``folded'' outcome of the players' strategies is not a perfect NE. As the rationality level $n$ increases, players can construct repeating sequences whose folding more precisely approximates a true NE of the original game. As $n \to \infty$, this slight imperfection, and the potential payoff gain it allows, diminishes to zero, leading to a NE outcome in the limit.

\subsubsection*{Case 2: The ``Trade'' Dissonance (Identical Rationality)}

This case illustrates why perfectly identical computational power can prevent a stable equilibrium, thus creating conditions for ``trade''. Suppose both players have the same rationality level, $\tau_1(n) = \tau_2(n) = n$.

\paragraph{The Problem of Perfect Alignment} When both players have the same period, their strategies align perfectly round after round. There is no ``mixing'' effect from misaligned cycles. This leads to a constant feeling of regret for one of the players.

\paragraph{The Guaranteed Deviation Payoff} Since Matching Pennies is a zero-sum game, one player (let's say player 1) must be receiving an average payoff of no more than $1/2$. In any given round, player 1 can see what player 2 is playing and think, ``If only I had chosen differently, I would have won''.

More precisely, player 1 can always deviate to a strategy that plays the best response to player 2's action in \emph{every single round}. Such a deviation strategy would yield a payoff of 1 in every round, guaranteeing an average payoff of 1. Therefore, by deviating, player 1 can always gain an extra payoff of at least $1 - 1/2 = 1/2$.

Because the potential gain from deviation ($\epsilon_n$) is always at least $1/2$ and never approaches zero, the players can never reach a Nash equilibrium. This persistent instability and strategic adjustment is what we interpret as trade.

\paragraph{The Counterpoint Technique}
The general proof formalizes this intuition. It requires number theory tools to handle the alignment of sequences with different periods. We call this the \textit{counterpoint technique}, borrowing a term from music for the art of harmoniously combining different melodic lines (strategies). The detailed proof is presented in \Cref{sec:counterpoint}.
\section{Strategic Under-utilization Leading to Trade}
\label{sec:strategic-under-utilization}

In the preceding sections, we assumed that agents, akin to participants in a computational ``arms race,'' are compelled to utilize their maximum available computational power. This meant an agent with rationality level $\tau_i$ was restricted to playing strategies with a period of exactly $\tau_i$. We now relax this assumption. We explore the strategic consequences when agents possess the flexibility to strategically under-utilize their computational power.

\subsection{A New Strategic Dimension: Under-utilizing Computational Power}

We now consider a scenario where an AI agent, endowed with a certain maximum computational capacity (rationality level $\tau_i$), is not forced to use it all. Instead, it can strategically choose to play a simpler strategy, one corresponding to a period $k$ that is less than or equal to its maximum capability $\tau_i$.

\begin{definition}[Flexible strategy space]
A player $i$ with maximum rationality level $\tau_i \in \N$ has a \textbf{flexible strategy space}, denoted $A_i^{\text{flex}}$. This space consists of all eventually repeated sequences $s_i = xy^\infty$ where the length of the repeating part $y$ (the period $p_i$) satisfies $1 \le p_i \le \tau_i$.
\end{definition}

This ability to choose any level of computational expression up to one's maximum introduces strategic under-utilization. An agent might mask its true capabilities to gain a strategic advantage, making it harder for opponents to predict its actions based on its presumed (maximum) rationality level.

\subsection{The Ruins of Nash Equilibria}

This newly introduced strategic flexibility necessitates a revision of our equilibrium concepts. The potential for an agent to under-utilize its power changes the consideration of best responses.

\begin{definition}[$\epsilon$-approximate Nash equilibrium with flexible rationality]
Consider a two-player normal-form game $\G = (\{1, 2\}, \{S_1, S_2\}, \{u_1, u_2\})$. Let $\tau_1, \tau_2 \in \N$ be the maximum rationality levels of the players. Player $i$'s strategy space in the unfolding game $\G^\uf$ is $A_i^{\text{flex}}$.

A strategy profile $s = (s_1, s_2)$, where $s_i \in A_i^{\text{flex}}$, is an \textbf{$\epsilon$-approximate Nash equilibrium with flexible rationality} ($\epsilon$-NE$^{\text{flex}}$) if for each player $i \in \{1, 2\}$,
\[
\max_{s_i' \in A_i^{\text{flex}}} u_i^\uf(s_i', s_j) - u_i^\uf(s) \leq \epsilon,
\]
where $s_j$ is the strategy of the other player.
\end{definition}

Building on this, we define Nash equilibrium for almost rational agents who have this strategic flexibility.

\begin{definition}[Nash equilibria for almost rational agents with strategic flexibility]\label{def:NE-flexible}
Consider a two-player normal-form game $\G$. Let $\{\tau_1(n)\}_{n=1}^\infty$ and $\{\tau_2(n)\}_{n=1}^\infty$ be the sequences of maximum rationality levels for the two almost rational agents. For each stage $n$, player $i$'s strategy space is $A_i^{\text{flex}}(n)$.

A sequence of strategy profiles $\{s^{(n)}\}_{n=1}^\infty$, where $s^{(n)} = (s_1^{(n)}, s_2^{(n)})$ with $s_i^{(n)} \in A_i^{\text{flex}}(n)$, is a \textbf{Nash equilibrium with strategic flexibility} if for each $n$, $s^{(n)}$ is an $\epsilon_n$-NE$^{\text{flex}}$ of $\G^\uf(n)$ (the unfolding game with these flexible strategy spaces) and $\lim_{n\to\infty} \epsilon_n = 0$.
\end{definition}

With this flexibility, the conditions for reaching a no-trade outcome (a Nash equilibrium where $\epsilon_n \to 0$) dramatically change. The ability to strategically under-utilize computational power can lead to a failure to converge to a Nash equilibrium regardless of players' rationality levels. For a specific class of games, i.e., Matching Pennies games given in \eqref{eq:matching-pennies-RC}, this can be quantified.

We recall the definition of the Matching Pennies game and state the main result. A Matching Pennies game is a two-player zero-sum game where each player simultaneously chooses to show a head or a tail. If the choices match, Player $1$ wins (payoff $1$) and Player $2$ loses (payoff $0$). If they don't match, Player $2$ wins (payoff $1$) and Player $1$ loses (payoff $0$). The payoffs are binary, and for each outcome, one player receives $1$ and the other $0$.

\begin{theorem}[No reachable NEs in Matching Pennies games with flexibility]\label{thm:instability-matching-pennies}
Consider the two-player Matching Pennies game $\G$ given in \eqref{eq:matching-pennies-RC}. Let the players be almost rational, with maximum rationality levels $\{\tau_1(n)\}_{n=1}^\infty$ and $\{\tau_2(n)\}_{n=1}^\infty$, not necessarily almost identical. Then, for any sequence of strategy profiles $\{s^{(n)}\}_{n=1}^\infty$ where each $s^{(n)}$ is an $\epsilon_n$-NE$^{\text{flex}}$ in the corresponding unfolding game $\G^\uf(n)$, it holds that for every stage $n$,
\[\epsilon_n \ge \frac{1}{3}. \]
Thus, it is impossible for the almost rational agents to reach any Nash equilibrium in the sense of \Cref{def:NE-flexible} in the Matching Pennies game.
\end{theorem}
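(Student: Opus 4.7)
The plan is a direct analysis that exploits two features special to Matching Pennies: its zero-sum structure ($u_1 + u_2 = 1$) and the closure of each flexible strategy space under bitwise complement (swapping $H \leftrightarrow T$), which preserves the period. Without loss of generality assume $\tau_1(n) \le \tau_2(n)$; the opposite case is symmetric with the roles of $v_1$ and $v_2$ swapped. Then for any $s_1 \in A_1^{\text{flex}}(n)$, the complement $\bar{s}_1$ has the same period $p_1 \le \tau_1(n) \le \tau_2(n)$, hence lies in $A_2^{\text{flex}}(n)$, and when played by player 2 it mismatches $s_1$ in every round. So in any $\epsilon$-NE$^{\text{flex}}$ profile $(s_1, s_2)$ with payoffs $v_1$ and $v_2 = 1 - v_1$, player 2's deviation to $\bar{s}_1$ yields payoff $1$, which gives the first constraint $\epsilon \ge 1 - v_2 = v_1$.

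The second step applies the same complement trick to player 1's deviations. For any candidate $s_1' \in A_1^{\text{flex}}(n)$, its complement $\bar{s}_1'$ also lies in $A_1^{\text{flex}}(n)$, and the pointwise identity $u_1(a_1, a_2) + u_1(\bar{a}_1, a_2) = 1$ averages to $u_1^\uf(s_1', s_2) + u_1^\uf(\bar{s}_1', s_2) = 1$. Writing $V := \max_{s_1' \in A_1^{\text{flex}}(n)} u_1^\uf(s_1', s_2)$, this forces the minimum over the same set to equal $1 - V$, so the actual payoff satisfies $v_1 \in [1 - V, V]$ with $V \ge 1/2$. Player 1's best-response condition then reads $\epsilon \ge V - v_1$, and combined with the earlier bound,
\[
\epsilon \ge \max(v_1,\; V - v_1), \qquad v_1 \in [1 - V,\; V].
\]

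The final step minimizes $f(y) := \max(y, V - y)$ over $y \in [1 - V, V]$. If $V \ge 2/3$, the global minimizer $y = V/2$ lies inside the interval and $f(V/2) = V/2 \ge 1/3$. If $V < 2/3$, then $1 - V > V/2$ so $f$ is monotone increasing on the interval and its minimum is $f(1 - V) = 1 - V > 1/3$. Either way $\epsilon \ge 1/3$, independently of $n$, of $\tau_1(n), \tau_2(n)$, and of the chosen profile. The main obstacle, which is conceptual rather than computational, is identifying the complement symmetry: a naive bound using only $V \ge 1/2$ together with $\epsilon \ge V/2$ would give $\epsilon \ge 1/4$, and it is the identity $V + \underline{V} = 1$, made possible by the binary zero-sum structure together with period-preserving bit flips, that tightens the bound to the claimed $1/3$.
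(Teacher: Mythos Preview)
Your proof is correct and rests on the same two observations as the paper's: one player can always secure payoff $1$ by tracking the other's periodic pattern, and the bit-flip map is payoff-reversing (so $u_1(a,b)+u_1(\bar a,b)=1$). The paper packages this more directly by taking WLOG on the actual periods ($p_1\ge p_2$) rather than the caps $\tau_i$, and then splitting on a threshold: if $v_1<1-\delta$ player~$1$ deviates to a round-by-round best response of period $p_2$ (gain $>\delta$), otherwise player~$2$ flips its own strategy to obtain payoff $v_1\ge 1-\delta$ (gain $\ge 1-2\delta$), and $\delta=1/3$ balances the two cases---no auxiliary quantity $V$ or two-regime optimization is needed.
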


This theorem implies that for Matching Pennies games, it is impossible to reach any Nash equilibrium when the almost rational agents can strategically under-utilize their computational power. The quantity $\epsilon_n$ remains bounded away from zero. The intuition is that the possibility of strategic under-utilization introduces a profound layer of strategic uncertainty. An agent can no longer reliably infer an opponent's strategy based on their maximum computational power, as the opponent might be playing a much simpler, unpredictable (from the perspective of maximal rationality) strategy. This uncertainty hinders convergence to any NE, leading to persistent strategic maneuvering, or ``trade.''

\begin{proof}[Proof of \Cref{thm:instability-matching-pennies}]
Consider a parameter $\delta\in[0,1]$. For the $n$-th stage of the unfolding game, let player $1$ play a strategy $s_1^{(n)}$ that is a sequence of length $p_1(n)\leq \tau_1(n)$, and player $2$ play a strategy $s_2^{(n)}$ that is a sequence of length $p_2(n)\leq \tau_2(n)$. Suppose $p_1(n) \geq p_2(n)$.

Suppose player $1$'s payoff is less than $1-\delta$ in the unfolding game, meaning that the average payoff over infinite interaction is less than $1-\delta$. Then, player $1$ can choose to play a strategy $s_1'^{(n)}$ that is a sequence of length $p_1'(n) = p_2(n)$ (the same length as player $2$'s strategy). For each round $j$ in $s_1'^{(n)}$, player $1$ plays the best response against player $2$'s action, which gives $1$ payoff in that round. Since both players have the same length of strategy, this strategy is possible. By doing so, player $1$ can guarantee an average payoff of $1$ in the unfolding game, i.e., an extra $\delta$ payoff compared to the previous strategy $s_1^{(n)}$. Thus, $\epsilon_n\geq \delta$.

Suppose otherwise that player $1$'s payoff is at least $1-\delta$ in the unfolding game. Then, player $2$ obtains a payoff of at most $\delta$ in the unfolding game. Player $2$ can then ``flip'' their strategy to play the other action in each round, which guarantees a payoff of $1-\delta$ in the unfolding game, i.e., an extra $1-2\delta$ payoff compared to the previous strategy $s_2^{(n)}$. Thus, $\epsilon_n\geq 1-2\delta$.

We can choose $\delta=1/3$ to show that $\epsilon_n\geq 1/3$ for all $n$, which completes the proof.
\end{proof}
\section{Discussion}\label{sec:discussion}

Below we present several real-world implications of our results.
\begin{enumerate}
    \item \textbf{The risk of a computational ``arms race'' in AI-driven markets.} In the real world, AI agents from different companies will likely have slightly different computational power. If these agents, especially those with similar training data and thus similar prior beliefs, are made to always use their maximum computational power, our findings show a significant risk: markets could stop functioning. This ``silence of the market,'' where trades do not happen because of a constant push for more computational strength, is a serious problem for the economy and could lead to major inefficiencies.
    \item \textbf{The pursuit of perfectly homogeneous AI systems can foster dynamic market interactions.} Our main result suggests that if AI agents become perfectly identical in their computational capabilities and prior beliefs, this can prevent the market from reaching a stable Nash equilibrium in some games. This implies that striving for perfectly consistent and powerful AI systems, rather than eliminating market uncertainty, might paradoxically lead to a vibrant market characterized by continuous strategic maneuvering (i.e., ``trade''), as the absence of a stable Nash equilibrium encourages ongoing interaction.
    \item \textbf{Strategic under-utilization of computation can encourage trade.} If AI systems strategically reduce their computational effort, essentially ``pretending'' to be less powerful or conserving resources, market activity can be stimulated. This under-utilization can break computational arms race deadlocks and make agent behavior less predictable, creating trade opportunities. This suggests designing AI agents with the ability to strategically limit their computational power, which could be crucial for active and efficient markets, rather than solely pursuing maximum processing capability.
\end{enumerate}

Our work also raises several questions for future research:
\begin{enumerate}
    \item \textbf{Multi-player extensions.} Our model currently focuses only on two-player games, as is the case in the no-trade theorem. However, real-world markets involve many agents. How do our findings extend to multi-player settings? Can we still find stable no-trade outcomes when many agents with different computational powers interact?
    \item \textbf{More general result for strategic under-utilization.} We show that in the Matching Pennies game, strategic under-utilization leads to the complete breakdown of any Nash equilibrium. Can we establish a more general result regarding strategic under-utilization across different types of games? Understanding the broader implications of our findings could help in designing more robust AI systems that can adapt to various market conditions.
\end{enumerate}

\paragraph{Acknowledgments.} This work is supported by the National Science and Technology Major Project (No. 2022ZD0114904). The authors wish to express their sincere gratitude to Ariel Rubinstein, Yanjing Wang, R. Ramanujam, Changrui Mu, and Dongchen Li. Their generous engagement and many insightful discussions have significantly improved this work. The authors also thank the audience from PKU Weekly Logic Seminar 2023 and the 7th World Congress of the Game Theory Society (GAMES 2024) for their feedback.

\bibliographystyle{plain}
\bibliography{ref}

\appendix
\newpage
\section{Eventually Repeated Sequences and Computational Agents}\label{sec:eventually-repeated}

This section argues that computationally bounded agents, when generating sequences of actions, are limited to producing eventually repeated sequences. We start by introducing Turing machines as a general model of computation. We then show that Turing machines restricted to finite memory are equivalent to finite automata. Finally, we demonstrate that finite automata inherently generate eventually repeated sequences.

\subsection*{Turing Machines and the Church-Turing Thesis}

A \textbf{Turing Machine (TM)}~\cite{turingComputableNumbersApplication1937} is a foundational model of computation. It consists of an infinite tape divided into cells, a read/write head that can move along the tape, and a finite set of internal states and rules. At each step, the TM reads the symbol on the tape cell under the head, and based on this symbol and its current internal state, it writes a new symbol on the tape, changes its internal state, and moves the head one cell to the left or right. Despite its simplicity, a TM can simulate any computer algorithm. 

The \textbf{Church-Turing Thesis}~\cite{copelandChurchTuringThesis2020} posits that any computable function can be computed by a TM. This means TMs can represent any algorithmic agent, including one generating a strategy (an infinite sequence of actions) based on finite game structures.

\subsection*{Turing Machines with Finite Memory and Finite Automata}

A \textbf{Deterministic Finite Automaton (DFA)} is a simpler computational model with a finite number of states. It reads an input string and transitions between states, accepting or rejecting the string~\cite{conwayRegularAlgebraFinite2012}. While DFAs typically handle finite strings, their finite-state nature is key for understanding sequence generation with limited resources.

We are interested in agents with \emph{bounded} computational resources, particularly finite memory. While a standard TM has an infinite tape, an agent with finite memory uses only a fixed portion of this tape to decide its next action.

\begin{lemma}\label{lem:tm-finite-tape-fa}
A Turing machine using only a fixed, finite portion of its tape (say, $k$ cells) to compute an output sequence can be simulated by a Finite Automaton (FA).
\end{lemma}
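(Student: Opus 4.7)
The plan is to encode the entire configuration of the bounded-tape Turing machine as a single state of a finite automaton, exploiting the fact that once the tape is restricted to $k$ cells the set of possible configurations is itself finite.

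First, I would formalize a \emph{configuration} of the restricted TM as a triple $(q, t, h)$, where $q$ lies in the TM's finite state set $Q$, $t \in \Sigma^k$ records the current contents of the $k$ usable tape cells over the finite tape alphabet $\Sigma$, and $h \in \{1, \dots, k\}$ is the head position. A direct counting argument then gives at most $|Q| \cdot |\Sigma|^k \cdot k$ distinct configurations, which is finite. This is the critical quantitative observation that makes the FA construction possible.

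Next, I would build the FA by taking its state set to be exactly the set of reachable configurations, with the initial FA state being the TM's initial configuration (blank tape, start state, head at cell~1). Because the TM is deterministic, its transition function $\delta$ induces a single successor configuration from each current one: read $t[h]$, apply $\delta$ to obtain a new internal state, a new symbol to write at position $h$, and a head move direction. This yields a deterministic transition on FA states. Since the TM is producing an output sequence, I would equip the FA with an output function $\lambda$ reading the emitted symbol off each configuration (e.g., the symbol written at the current step, or the action label emitted in Moore/Mealy form), making the FA a finite-state transducer. Equivalence is then established by a routine induction on step count: after $n$ transitions the FA occupies the state corresponding to the TM's $n$-th configuration, and hence emits the same $n$-th output symbol.

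The main obstacle is not the construction itself but pinning down the interface between the TM's ``output'' and the FA's output function, since the paper introduces the TM informally without fixing a mechanism for emitting its infinite action sequence. I would resolve this by adopting a per-step Moore-style convention --- the symbol emitted on entering a configuration is a fixed function of $(q, t, h)$ --- which is compatible with any of the standard formulations (output cell, write-symbol, or action label) and remains well-defined because the TM is deterministic. A small side issue is the boundary behavior: if the head would leave the $k$-cell region, I would either treat such a move as a no-op or as a transition to a designated sink state, both of which remain finite; either choice preserves the invariant that the FA faithfully tracks the TM's behavior on its bounded tape, completing the simulation.
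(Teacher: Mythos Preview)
Your proposal is correct and follows essentially the same approach as the paper: encode each TM configuration $(q,t,h)$ as an FA state, use finiteness of $|Q|\cdot|\Sigma|^k\cdot k$ to ensure the state set is finite, and let the TM's transition rule induce the FA transitions with outputs attached to states or transitions. You supply more detail than the paper (the explicit count, the induction on steps, the boundary convention, and the Moore-style output), but the underlying argument is identical.
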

\begin{proof}
The TM's operation on $k$ cells depends on its internal state, the $k$ cell contents, and its head position. Since the TM's internal states are finite, the tape alphabet is finite (so $k$ cells have finitely many content combinations), and head positions within $k$ cells are finite, the total number of distinct (TM state, $k$-cell content, head position) combinations is finite. Each such combination can be a state in an FA. The TM's transition rules and tape-writing rules together map directly to FA transition rules between these combined states. Thus, an FA can simulate the TM. Outputs can be associated with FA states or transitions.
\end{proof}

This lemma means that a computational agent with finite memory for decision-making can be modeled as a finite automaton.

\subsection*{Finite Automata and Eventually Repeated Sequences}

Consider a finite automaton that generates an infinite sequence of outputs (e.g., actions). This can be thought of as a machine (like a Moore machine~\cite{mooreGedankenexperimentsSequentialMachines1956}) that, after an initial setup (e.g., reading the game structures), produces an output at each step based on its current state and its previous output for infinitely many steps: exactly the strategies in the unfolding game.

\begin{theorem}\label{thm:fa-eventually-periodic}
Any infinite sequence of outputs generated by a deterministic finite automaton (with fixed or no ongoing input after initialization) is eventually periodic.
\end{theorem}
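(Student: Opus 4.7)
The plan is to exploit the finiteness of the state space together with the deterministic nature of the transition to invoke the pigeonhole principle, obtain a recurring state, and then use determinism to show that the tail of the state sequence is purely periodic.

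First I would set up the picture formally. Once the automaton has been initialized (e.g.\ after reading the fixed game-structure input), there is no further input, so the next-state rule reduces to a function $\delta: Q \to Q$ on the finite state set $Q$, and the generated output in round $t$ is a function $\lambda$ of $q_t$ (for a Moore-style machine) or of $(q_t, \delta(q_t))$ (for a Mealy-style one)---in either case a deterministic function of $q_t$ alone. Starting from the initial state $q_0$, the run is the sequence $q_0, q_1, q_2, \dots$ with $q_{t+1} = \delta(q_t)$, and the output stream is $o_t = \lambda(q_t)$.

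Next I would apply the pigeonhole principle. Let $N = |Q|$. Among the $N+1$ states $q_0, q_1, \dots, q_N$, two must coincide, say $q_i = q_j$ with $0 \le i < j \le N$. Since $\delta$ is a function, $q_{t+1}$ is determined solely by $q_t$; hence by an immediate induction on $k$, $q_{i+k} = q_{j+k}$ for all $k \ge 0$. Setting $p = j - i$, this says $q_{t+p} = q_t$ for every $t \ge i$, so the state sequence is eventually periodic with a period of length at most $p \le N$. Applying $\lambda$ pointwise then yields $o_{t+p} = o_t$ for all $t \ge i$, which is precisely the statement that the output sequence can be written as $x y^\infty$ with $|x| \le i$ and $|y| = p$.

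I expect no genuine obstacle here; the only point that needs a little care is making sure the setup covers the forms of finite automaton used in \Cref{lem:tm-finite-tape-fa}, where outputs may be attached either to states or to transitions. This is handled uniformly by noting that in both cases the output at time $t$ is a deterministic function of $q_t$ (since $\delta$ is deterministic, any transition-labelled output is itself a function of $q_t$), so the periodicity of the state sequence transfers directly to the output sequence. Combined with \Cref{lem:tm-finite-tape-fa}, this will give exactly the structure $s = x y^\infty$ required by our definition of rationality levels, completing the justification that computationally bounded agents produce only eventually repeated strategies.
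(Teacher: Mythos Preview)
Your proposal is correct and follows essentially the same approach as the paper: pigeonhole on the finite state set to find a repeated state, then determinism to conclude the tail is purely periodic. Your version is somewhat more formal (explicitly writing $\delta$, $\lambda$, and the induction $q_{i+k}=q_{j+k}$) and adds the useful observation that Moore- and Mealy-style outputs are both functions of $q_t$ alone, but the argument is the same.
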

\begin{proof}
Imagine the finite automaton stepping through its states one by one, generating an output at each step. Let's say the automaton has a specific, finite number of states (e.g., $N$ states).
As the automaton runs, it generates a sequence of states: state at step 0, state at step 1, state at step 2, and so on.
Since there are only $N$ possible states, if the automaton runs for more than $N$ steps (say, $N+1$ steps), it must have visited at least one state more than once. Think of it like having $N$ boxes (the states) and $N+1$ items (the time steps when a state is visited); at least one box must contain two items.

So, there must be some state that the automaton was in at an earlier step (say, step $j$) and then returned to at a later step (say, step $k$).
Because the automaton is deterministic (its next state and output are uniquely determined by its current state, given that any external input is now fixed or absent), once it re-enters the state it was in at step $j$, its future behavior will be an exact repeat of its behavior after step $j$.
The sequence of states from step $k$ onwards will be the same as the sequence of states from step $j$ onwards. Consequently, the sequence of outputs from step $k$ onwards will also be the same as the sequence of outputs from step $j$ onwards.

This means the overall output sequence will look like this: an initial sequence of outputs (from step 0 up to step $j-1$), followed by a sequence of outputs (from step $j$ up to step $k-1$) that then repeats forever. This is precisely what an eventually periodic sequence is: an initial, non-repeating part, followed by a part that repeats indefinitely.
\end{proof}

In conclusion, we have the following:
\begin{enumerate}
    \item Computational agents can be modeled as Turing machines (Church-Turing thesis).
    \item TMs with finite memory for generating action sequences are equivalent to finite automata (\Cref{lem:tm-finite-tape-fa}).
    \item Finite automata generating infinite sequences produce eventually repeated sequences (\Cref{thm:fa-eventually-periodic}).
\end{enumerate}
Therefore, computationally bounded agents (with finite memory) are restricted to strategies that are eventually repeated sequences. This justifies focusing on such sequences for these agents, as discussed in \Cref{sec:unfolding-game-rationality-definition}.

\subsection*{Remark: Effect of Randomness}
For completeness, we note that the above discussion assumes deterministic automata and Turing machines. If we allow randomness in the computation, then, in principle, the agent can generate any sequence of actions, including non-eventually repeated ones. Our results would need to be adapted to account for this increased expressive power. 

However, in reality, randomness occurring in AI agents is typically pseudo-randomness (e.g., using a fixed seed), which degenerates to deterministic behavior. Since we are interested in the limit behavior of agents, the gap between deterministic and pseudo-random behavior becomes negligible in the long run. Thus, we can safely focus on deterministic agents generating eventually repeated sequences without loss of generality.
\newpage
\section{The Counterpoint Technique}\label{sec:counterpoint}

In this section, we formally develop the counterpoint technique and give detailed proofs of \Cref{thm:main-result} and \Cref{thm:correspondence}.

\subsection{Almost Coprime: Number Theory for ``Numbers at Infinity''}

To prove these two results, we need to prove a stronger result. That is, we do not assume the rationality levels of players are almost identical. The purpose of this number theory is straightforward: almost-rationality levels are ``numbers at infinity.'' Our number theory is a natural extension of classical number theory for finite numbers. For an introduction to classical number theory, see the textbook \cite{adamsIntroductionNumberTheory1976}.

We first consider the case of two numbers at infinity, i.e., the rationality levels of two players. We start with a simple observation. For two large numbers $a$ and $b$, if they are coprime, their greatest common divisor is $1$, which is very small. Such a phenomenon can be generalized to numbers that are \emph{not} coprime. For two large numbers $a$ and $b$, if their greatest common divisor is relatively small compared with $a$ and $b$ themselves, then when viewed from infinity, $a$ and $b$ are almost coprime. The precise meaning of this intuition is given in the following definition.

\begin{definition}[almost coprime]
Two sequences of natural numbers $\{a_n\}_{n=1}^\infty$ and $\{b_n\}_{n=1}^\infty$ are called \emph{almost coprime} if 
\[\lim_{n\to\infty}\frac{\gcd(a_n,b_n)}{a_n}=0\text{ and }\lim_{n\to\infty}\frac{\gcd(a_n,b_n)}{b_n}=0.\]
Or, equivalently, $\lim_{n\to\infty}\frac{\gcd(a_n,b_n)}{\min\{a_n,b_n\}}=0$. Here, $\gcd(a_n,b_n)$ is the greatest common divisor of $a_n$ and $b_n$.
\end{definition}

Under the concept of being almost identical, we can provide a more intuitive understanding of almost coprime.

\begin{proposition}\label{prop:iff-almost-coprime}
Suppose $\{a_n\}$ and $\{b_n\}$ are almost identical. Then the following two statements are equivalent:
\begin{enumerate}
    \item $\{a_n\}$ and $\{b_n\}$ are almost coprime.
    \item $a_n\neq b_n$ for all but finitely many $n$. That is, there exists $n_1\in\N$ such that for all $n\geq n_1$, $a_n\neq b_n$.
\end{enumerate}
\end{proposition}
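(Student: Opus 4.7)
The plan is to prove the two directions separately, both resting on the elementary number-theoretic fact that $\gcd(a_n, b_n)$ divides any integer linear combination of $a_n$ and $b_n$, in particular their difference $a_n - b_n$.

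For the direction $(2) \Rightarrow (1)$, I would fix $n \geq n_1$ and write $d_n := \gcd(a_n, b_n)$. Since $d_n$ divides both $a_n$ and $b_n$, it divides $|a_n - b_n|$. The assumption $a_n \neq b_n$ makes $|a_n - b_n|$ a \emph{positive} multiple of $d_n$, so $d_n \leq |a_n - b_n|$. Dividing by $\min\{a_n, b_n\}$ gives
\[
\frac{d_n}{\min\{a_n, b_n\}} \;\leq\; \frac{|a_n - b_n|}{\min\{a_n, b_n\}},
\]
and the right-hand side tends to $0$ by the almost-identical hypothesis. This is exactly the $\min$-based formulation of almost coprimality.

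For the direction $(1) \Rightarrow (2)$, I would argue by contrapositive. If $a_n = b_n$ for infinitely many $n$, then along that subsequence $d_n = a_n = \min\{a_n, b_n\}$, so $d_n / \min\{a_n, b_n\} = 1$, which cannot converge to $0$. Hence $\{a_n\}$ and $\{b_n\}$ fail to be almost coprime. Note that this direction does not use the almost-identical hypothesis at all; it is really the forward direction where almost-identicality carries all the weight.

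There is no serious obstacle to the proof. The only housekeeping step is to verify that the two formulations of almost coprimality stated in the definition --- the single $\min$-based limit and the pair of limits $\gcd(a_n, b_n)/a_n \to 0$ and $\gcd(a_n, b_n)/b_n \to 0$ --- are equivalent, which is immediate from $\min\{a_n, b_n\} \leq a_n$ and $\min\{a_n, b_n\} \leq b_n$ together with a squeeze argument. After that, the entire proposition reduces to the two-line divisibility argument above.
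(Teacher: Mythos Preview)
Your proof is correct. The paper actually states this proposition without proof, treating it as an elementary observation; your divisibility argument --- bounding $\gcd(a_n,b_n)\le |a_n-b_n|$ whenever $a_n\neq b_n$ for one direction, and noting $\gcd(a_n,a_n)/a_n=1$ along an infinite subsequence for the contrapositive of the other --- is exactly the natural way to fill that gap.
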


In other words: Being almost coprime is equivalent to being almost identical but always slightly different. We will relax the constraint of being almost identical in \Cref{thm:main-result} and prove the following more general theorem:

\begin{theorem}\label{thm:main-result-almost-coprime}
Consider two players with almost-rationality levels $\{\tau_1(n)\}_{n=1}^\infty$ and $\{\tau_2(n)\}_{n=1}^\infty$. The following statements are equivalent:
\begin{enumerate}
    \item The rationality levels of both players are almost coprime.
    \item For every two-player finite normal-form game $\G$, the two players can reach a Nash equilibrium in the sense of \Cref{def:NE-for-almost-rational}.
\end{enumerate}
Moreover, \Cref{thm:correspondence} also holds for players with almost coprime rationality levels.
\end{theorem}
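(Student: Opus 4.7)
The plan hinges on the counterpoint mixing driven by $g_n=\gcd(\tau_1(n),\tau_2(n))$. Over the joint period $\tau_1(n)\tau_2(n)/g_n$, position $\ell$ of player $i$'s cycle encounters a coset of $\langle\tau_i(n)\rangle\le\mathbb{Z}/\tau_j(n)\mathbb{Z}$ of size $\tau_j(n)/g_n$. Almost-coprimality sends both coset sizes to infinity, while its failure pins at least one coset size to a bounded integer along a subsequence. A short elementary check of \Cref{prop:iff-almost-coprime} converts the ``always slightly different'' language of \Cref{thm:main-result} and \Cref{thm:correspondence} into the almost-coprime formulation, so I focus on the generalized statement.

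For the forward direction, which simultaneously delivers \Cref{thm:correspondence}, fix any mixed Nash equilibrium $\sigma^*=(\sigma_1^*,\sigma_2^*)$ of $\G$ via Nash's theorem. For each $n$ I construct $s_i^{*(n)}\in A_i^\uf(n)$ with the \emph{balanced-coset property}: within each of the $g_n$ cosets of $\langle\tau_j(n)\rangle\le\mathbb{Z}/\tau_i(n)\mathbb{Z}$, the empirical frequency of every action $a\in S_i$ differs from $\sigma_i^*(a)$ by at most $O(g_n/\tau_i(n))$. A Bresenham-style low-discrepancy sequence laid down independently inside each coset produces this. A direct counting over the joint period then yields, for every periodic deviation $s_i'\in A_i^\uf(n)$ with folding $\mu_i'$,
\[
u_i^\uf(s_i',s_j^{*(n)})=u_i(\mu_i',\sigma_j^*)+O(g_n/\tau_j(n)),
\]
and $u_i(\mu_i',\sigma_j^*)\le u_i(\sigma^*)$ by the Nash property, so the deviation gain is $o(1)$. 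Specializing to $s_i'=s_i^{*(n)}$ gives $u_i^\uf(s^{*(n)})\to u_i(\sigma^*)$ and $\mu_i^{(n)}\to\sigma_i^*$, which is exactly \Cref{thm:correspondence}.

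For the reverse direction I use a constant-sum game chosen adaptively. Passing to a subsequence $\{n_k\}$ with $g_{n_k}/\min(\tau_1(n_k),\tau_2(n_k))\ge c>0$, WLOG $\tau_1(n_k)\le\tau_2(n_k)$, and passing once more I may assume $k_0:=\tau_1(n_k)/g_{n_k}$ is a fixed integer bounded by $K:=\lfloor 1/c\rfloor$. I then pick a $2\times 2$ constant-sum game whose unique mixed Nash equilibrium $\sigma^*$ satisfies $\sigma_1^*(a)\cdot k_0\notin\mathbb{Z}$ for each action --- NE probabilities $(1/(K+1),K/(K+1))$ work uniformly across $k_0\le K$. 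Since $\nu\mapsto\max_a u_2(\nu,a)$ is piecewise linear with a kink at $\sigma_1^*$, and since each size-$k_0$ coset distribution $\nu_C$ lives on a denominator-$k_0$ lattice missing $\sigma_1^*$, a coset-wise Jensen bound yields
\[
B_2(s_1)=\frac{1}{g_{n_k}}\sum_C\max_a u_2(\nu_C,a)\ge u_2(\sigma^*)+\delta
\]
uniformly in $s_1$, for some constant $\delta=\delta(\G,K)>0$. The companion bound $B_1(s_2)\ge u_1(\sigma^*)$ follows from plain Jensen. Summing and invoking the constant-sum identity $u_1^\uf+u_2^\uf=u_1(\sigma^*)+u_2(\sigma^*)$ gives $(B_1-u_1^\uf)+(B_2-u_2^\uf)\ge\delta$, so $\epsilon_{n_k}\ge\delta/2$ for every strategy pair and no NE sequence can exist.

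The main obstacle I expect is precisely the reverse direction's game choice. Matching Pennies is not a universal witness: its NE $(1/2,1/2)$ is a $k_0$-lattice point whenever $k_0$ is even, and a balanced-coset construction on $\tau_1(n)=4n$, $\tau_2(n)=2n m_n$ (with $m_n$ odd and $m_n\to\infty$) actually drives $\epsilon_n\to 0$ inside Matching Pennies itself. The remedy is to first stabilize $k_0$ along a subsequence and then pick a constant-sum game whose NE denominator exceeds $K$; the zero-sum identity then cleanly collapses the joint regret into the single Jensen gap $\delta$, giving the uniform lower bound on $\epsilon_{n_k}$ that defeats every candidate NE sequence.
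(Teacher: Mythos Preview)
Your proposal is correct and rests on the same structural insight as the paper: over the joint period, the interaction decomposes into $g_n$ ``bundles'' (your cosets of $\langle\tau_j\rangle$ in $\mathbb{Z}/\tau_i\mathbb{Z}$), and within each bundle the payoff is bilinear in the two players' bundle-frequency vectors (the paper's \Cref{lemma:no-coprime-cal-f-val}). The implementations differ in both directions, though. For sufficiency, the paper uses block-form ``simple melodies'' $a_1^{n_1}\cdots a_r^{n_r}$ and then \emph{proves} they are approximately balanced across bundles (\Cref{lemma:coprime-num-hetero-bundle}, \Cref{cor:coprime-bundle-NE-approaching}); your Bresenham construction bakes the per-coset balance in directly, which shortcuts that lemma at the cost of a less canonical witness. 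For necessity, the paper picks a small $\delta$ with $\limsup g_n/\min\{\tau_1,\tau_2\}\ge 3\delta$, uses a \emph{non}-constant-sum modified Matching Pennies $\G_\delta$ whose unique NE sits at $(\delta,1-\delta)$, bounds each bundle profile away from the NE via the minimum-nonzero-component argument, and then recovers a single-player regret by continuity plus pigeonhole (\Cref{lemma:reverse-f_G-Lipschitz}, \Cref{prop:2player-nonapproach}). You instead stabilize the integer $k_0=\tau_1/g$ along a subsequence, choose a constant-sum game whose NE denominator exceeds $K=\lfloor 1/c\rfloor$, and use the identity $u_1^\uf+u_2^\uf=c$ to collapse the two regrets into a single Jensen gap. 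Your route is arguably slicker---the constant-sum identity replaces the pigeonhole step---while the paper's avoids the subsequence stabilization and works with a single quantitative parameter $\delta$. Your diagnosis that plain Matching Pennies fails as a universal witness (because $(1/2,1/2)$ lies on the $k_0$-lattice for even $k_0$) is exactly the obstruction both arguments must circumvent, and both do so by making the game depend on the rationality sequences.
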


Importantly, for different directions of the proof, our goal is different.
\begin{itemize}
    \item Concerning the sufficiency proof of \Cref{thm:main-result-almost-coprime}, for \emph{each NE in each finite normal-form game}, we want to \emph{construct} a series of strategy profiles and show that they are $\epsilon_n$-NEs with $\epsilon_n\to 0$ as $n\to\infty$. Thus, we only need to have the ability to calculate the approximation $\epsilon_n$ of our constructed strategy profiles.
    \item Concerning the necessity proof of \Cref{thm:main-result-almost-coprime}, we consider the contrapositive statement. That is, we want to show that if players are not almost coprime, then there \emph{exists a NE of a finite normal-form game} such that the approximation of \emph{any} strategy profile cannot be an $\epsilon_n$-NE with $\epsilon_n\to 0$ as $n\to\infty$. Thus, we need to show that the approximation of any strategy profile is large enough.
\end{itemize}

Quite intuitively, the necessity proof is much harder than the sufficiency proof since we must be able to calculate an appropriate lower bound for the approximation of \emph{any} strategy profile. Nevertheless, even the sufficiency proof is not easy, since we need at least to know how to \emph{compare} the payoff of the constructed strategy profiles with that of \emph{any} other strategy profile. Moreover, it seems less intuitive how we can come up with the condition of being almost coprime. Below, we gradually build up our intuition and develop the counterpoint technique.

\subsection{Preparations}

Below we use $\I=\{1,2\}$ to represent the player set. We use the notation $f_\G(s)$ to denote the maximum deviation of a strategy profile $s$ in the game $\G$, where $\G$ could be either a finite normal-form game or an unfolding game. We use $\fld(s)$ to denote the folding mixed strategy of $s$ in the original game $\G$.

A very important fact is that it suffices to consider a finite fragment of strategies (viewed as infinite sequences).

\begin{lemma}\label{lemma:only-need-repeated}
Let $\G=(\I,\{S_i\}_{i\in \I},\{u_i\}_{i\in \I})$ be a finite normal-form game and $\G^\uf$ be its unfolding. Then there exists $\tau=\lcm(\tau_i)_{i\in \I}$ such that for any sufficiently large $k$ and any strategy profile $s=(s_i)_{i\in \I}$ of $\G^\uf$,
\[u_i^\uf(s)=\frac{1}{\tau}\sum_{j=k+1}^{k+\tau} u_i(s(j))\quad \text{for each }i\in \I.\]
Moreover, suppose $\mu=\fld(s)$. Then for each $i\in \I$ and $a\in S_i$,
\[\mu_i(a)=\frac{1}{\tau}|\{k+1\leq j\leq k+\tau:s_i(j)=a\}|.\]
\end{lemma}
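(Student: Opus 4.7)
The plan is to exploit the eventually periodic structure guaranteed by each player's rationality level. By definition, each strategy $s_i$ has the form $s_i = x_i y_i^\infty$ where $x_i$ is a finite prefix and $y_i$ is a block of length $\tau_i$ that repeats indefinitely. I would set $K = \max_{i \in \I} |x_i|$ and first observe that for any round index $j > K$, the action $s_i(j)$ is determined purely by the residue of $j - |x_i|$ modulo $\tau_i$, so every player has entered their periodic phase.

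Next, let $\tau = \lcm(\tau_i)_{i \in \I}$. Because $\tau$ is a multiple of each individual period $\tau_i$, one has $s_i(j + \tau) = s_i(j)$ for every player $i$ and every $j > K$. Consequently, the joint sequence of action profiles $(s(K+1), s(K+2), \dots)$ is purely periodic with period $\tau$, and in particular the sum $\sum_{j=k+1}^{k+\tau} u_i(s(j))$ and the count $|\{k+1 \le j \le k+\tau : s_i(j) = a\}|$ are both independent of $k$ once $k \geq K$. This is the whole arithmetic content of the lemma.

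From here both equations follow by a routine averaging argument. Splitting the partial sum $\sum_{j=1}^n u_i(s(j))$ into the contribution from the prefix rounds $1, \ldots, K$, a whole number of length-$\tau$ complete periods, and a leftover of fewer than $\tau$ terms, the prefix and leftover together contribute $O(1/n)$ to the average and vanish in the $\liminf$; the complete-period contribution yields exactly $\frac{1}{\tau} \sum_{j=k+1}^{k+\tau} u_i(s(j))$ for any fixed $k \ge K$. The folding identity follows by applying the same argument with the $\{0,1\}$-valued function that indicates $s_i(j) = a$ in place of $u_i(s(j))$, which as a byproduct also shows that the limit in the definition of $\fld(s)$ actually exists. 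The only real bookkeeping is choosing $K$ large enough to sit past every prefix $x_i$ simultaneously; once that is done the lemma reduces to the elementary number-theoretic fact that the $\lcm$ gives a common period for the joint action profile sequence, which is precisely what motivates the counterpoint machinery developed in the rest of this section.
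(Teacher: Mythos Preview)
Your proposal is correct and follows essentially the same approach as the paper's proof: identify a threshold past all prefixes, observe that the joint profile becomes purely periodic with period $\tau=\lcm(\tau_i)_{i\in\I}$, and then average over one full period. The only cosmetic difference is that you spell out the prefix/complete-periods/leftover decomposition explicitly, whereas the paper compresses this into a citation of ``the basic property of limits'' applied to \eqref{eq:alter-cal}.
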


\begin{proof}
    The proof mainly relies on the repeated structure of the strategy profiles. Combining the definition of $u^\uf$ and $\mu$ with the basic property of limits, we know that for every fixed $k\geq 1$ and strategy profile $s$, we can calculate $\mu$ and $u^\uf$ by
    \begin{equation}
        u_i^\uf(s)=\lim_{n\to\infty}\frac{1}{n-k+1}\sum_{j=k}^n u_i(s(j))\quad\text{and}\quad \mu_i(a) = \lim_{n \to \infty} \frac{1}{n-k+1}\sum_{j=k}^n\chi_{\{s_i(j)=a\}}(j).\label{eq:alter-cal}
    \end{equation}

    Suppose player $i$'s rationality level is $\tau_i$. We claim that $\tau=\lcm(\tau_i)_{i\in \I}$ has the desired property. Note that every strategy is eventually periodic. Then there exists a $k_0$ such that for all $i\in \I$, $(s_i(j))_{j\geq k_0}$ is a periodic infinite sequence. We take any $k\geq k_0$. Clearly, after the $k$-th round, every player's strategy has the form $(x_1\dots x_{\tau_i})^\infty$ with each $x_j$ an action in $S_i$. By repeating $x_1\dots x_{\tau_i}$ $\tau/\tau_i$ times, the strategy can also be written as 
    \[(\underbrace{x_1\dots x_{\tau_i}x_1\dots x_{\tau_i}\dots x_1\dots x_{\tau_i}}_{\tau/\tau_i\text{ times}})^\infty=(y_1\dots y_\tau)^\infty.\] 
    Now, the strategy profile after the $k$-th round has the form:
    \[\begin{array}{cccccc}
        s_1:&s_1(k+1)&s_1(k+2)&\cdots&s_1(k+\tau)&\cdots\\
        s_2:&s_2(k+1)&s_2(k+2)&\cdots&s_2(k+\tau)&\cdots\\
        \vdots&\multicolumn{5}{c}{\cdots}\\
        s_{|\I|}:&s_{|\I|}(k+1)&s_{|\I|}(k+2)&\cdots&s_{|\I|}(k+\tau)&\cdots\\
    \end{array}\]
    Thus, viewing $s$ as an infinite sequence, we know that $s$ itself is eventually periodic with period $\tau$ after the $k$-th round. Then, as a function of $j$, $u_i(s(j))$ and $\chi_{\{s_i(j)=a\}}(j)$ are also eventually periodic. Then, by the basic property of limits, the lemma follows from \eqref{eq:alter-cal}.
\end{proof}

Then we define the basic concepts used in the counterpoint technique. Let us use music to make a metaphorical description. Strategies are voices sung by different players. The key feature of strategies is that they are repetitions of a basic melody. Different melodies have different characteristics. Thus, the goal of a composer is to write melodies so that different voices progress in harmony. A typical technique used for such music is called \emph{counterpoint}. Thus, we also call our technique counterpoint. Formally, we have the following notions.

\begin{definition}[notes, melodies, chords, and pieces]
Let $\G=(\I,\{S_i\}_{i\in \I},\{u_i\}_{i\in \I})$ be a finite normal-form game. Let $\tau_i$ be the rationality level of player $i$. Consider the unfolding $\G^\uf$ with rationality level $(\tau_i)_{i\in \I}$. Consider a strategy profile $s$ of $\G^\uf$. We define the following concepts:
\begin{itemize}
    \item The \emph{melody} $m_i$ of $s_i$ ($i\in \I$) is the sequence $s_i(1)s_i(2)\dots s_i(\tau_i)$.
    \item The $j$-th \emph{note} of the melody $m_i$ is $m_i(j)$, i.e., the $j$-th element of $m_i$. We view notes occurring in different positions in a melody as different notes. That is, even if $m_i(j_1)=m_i(j_2)$, we \emph{do not} regard them as the same note unless $j_1=j_2$.
    \item A \emph{chord} is a pure strategy profile of $\G$ given by every player's note in the melody. We also consider the positions of notes when comparing two chords.
    \item The \emph{piece} $p$ of $s$ is the sequence $s(1)s(2)\dots s(\tau)$, where $\tau$ is defined as in \Cref{lemma:only-need-repeated}, i.e., $\tau=\lcm(\tau_i)_{i\in \I}$.
\end{itemize}
\end{definition}

When a melody is chosen, the corresponding repeated strategy can be constructed in an obvious way. Thus, when all players have chosen their melodies, chords at each round and the piece are uniquely determined. In view of \Cref{lemma:only-need-repeated}, we only need to consider all pieces of length $\tau$ such that every player $i$ produces a melody of length $\tau_i$.

To simplify the following discussion, suppose without loss of generality that $u_i\geq 0$ for all $i\in \I$.

Now fix a finite normal-form game $\G=(\I,\{S_i\}_{i\in \I},\{u_i\}_{i\in \I})$ and its Nash equilibrium $\sigma_*$. We discuss what kinds of melodies can converge to $\sigma_*$ in $\G$. This will establish the sufficiency of \Cref{thm:main-result-almost-coprime}. The necessity will be proved in \Cref{subsec:modified-matching-pennies}.

We can understand harmony in games by its reverse. We view the level of dissonance as the maximum deviation of a certain part. When two melodies start at different moments, different notes will occur at the same moment. In an infinite musical piece, we cannot expect permanent harmony in its progress. We have to choose melodies so that they are always not too dissonant.

Intuitively, \emph{the simpler the melodies are, the more likely harmony can be reached}. The simplest melody is one where every note appears only once but with different durations. For example, suppose player 1 is writing a basic melody of length $6$; then the simplest one is $HHHTTT$. Writing $HHHTTT$ gives more chances to reach harmony than writing $HTTHTH$. More generally, to converge to an NE, players should play strategies in a similar manner: First, play action $a_1$ for a $p_1$ proportion of time, then play action $a_2$ for a $p_2$ proportion of time, and so on. We call it the \emph{simple melody}. Formally, we have the following definition.

\begin{definition}[simple melodies]
A melody of player $i$ is called \emph{simple} if it has the form 
\[a_1^{n_1}a_2^{n_2}\dots a_r^{n_r}\]
for distinct actions $a_k$.
\end{definition}

Now we formally construct the simple melodies used for converging to the NE $\sigma_*$. Consider player $i$ with a rationality level of $\{\tau_i(n)\}_{n\geq 1}$. Consider actions in $\supp(\sigma_{*i})$, enumerated by $a_{i1},\dots,a_{ir_i}$. By rational approximation, for each $a_{ik}$ ($1\leq k\leq r_i-1$), there exists a natural number sequence $\{\nu_{i,k}(n)\}_{n\geq 1}$ such that

\begin{equation}
    \lim_{n\to\infty}\frac{\nu_{i,k}(n)}{\tau_i(n)}=\sigma_{*i}(a_{ik}).\label{eq:simple-melody-freq}
\end{equation}

Let $\nu_{i,{r_i}}(n)=\tau_i(n)-\sum_{k=1}^{r_i-1}\nu_{i,k}(n)$. Then it is not hard to show that \eqref{eq:simple-melody-freq} also holds for $k=r_i$. Now, for the $n$-th rationality level, player $i$ can choose the following simple melody:

\begin{equation}
    m_i^\s{n}=a_{i1}^{\nu_{i,1}(n)}a_{i2}^{\nu_{i,2}(n)}\dots a_{ir_i}^{\nu_{i,r_i}(n)}.\label{eq:simple-melody}
\end{equation}

We directly have the following property.

\begin{lemma}\label{lemma:simple-approach}
Suppose each player $i$ chooses a simple melody $m_i$ as in \eqref{eq:simple-melody}. Construct a strategy profile $s^n$ using these melodies. Then, $\lim_{n\to\infty}\fld\left(s^n\right)=\sigma_*$.
\end{lemma}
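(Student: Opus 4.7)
The plan is to reduce the folding of $s^n$ to a count within a single period, invoke the construction \eqref{eq:simple-melody-freq}, and conclude coordinate by coordinate.

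First I would observe that (up to any finite prefix, which by \Cref{lemma:only-need-repeated} does not affect the folding) the sequence $s_i^n$ is the infinite repetition of the simple melody $m_i^{(n)}$, a purely periodic sequence of period $\tau_i(n) = \sum_{k=1}^{r_i}\nu_{i,k}(n)$. Hence for each $a \in S_i$,
\[
\mu_i^{(n)}(a) = \frac{|\{1 \le j \le \tau_i(n) : m_i^{(n)}(j) = a\}|}{\tau_i(n)}.
\]
Reading off the explicit form \eqref{eq:simple-melody}, each support action $a_{ik}$ appears exactly $\nu_{i,k}(n)$ times in one period, while any $a \notin \supp(\sigma_{*i})$ does not appear at all. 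Thus $\mu_i^{(n)}(a_{ik}) = \nu_{i,k}(n)/\tau_i(n)$ for $1 \le k \le r_i$, and $\mu_i^{(n)}(a) = 0$ otherwise.

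Next I would pass to the limit. For $1 \le k \le r_i - 1$, equation \eqref{eq:simple-melody-freq} directly gives $\lim_{n\to\infty}\nu_{i,k}(n)/\tau_i(n) = \sigma_{*i}(a_{ik})$. For the remaining coordinate $k = r_i$, the defining relation $\nu_{i,r_i}(n) = \tau_i(n) - \sum_{k=1}^{r_i-1}\nu_{i,k}(n)$ together with $\sum_{k=1}^{r_i}\sigma_{*i}(a_{ik}) = 1$ (which holds because $\supp(\sigma_{*i}) = \{a_{i1},\dots,a_{ir_i}\}$) yields
\[
\lim_{n\to\infty}\frac{\nu_{i,r_i}(n)}{\tau_i(n)} = 1 - \sum_{k=1}^{r_i-1}\sigma_{*i}(a_{ik}) = \sigma_{*i}(a_{ir_i}).
\]
For off-support $a$ both sides are identically zero, so the convergence $\mu_i^{(n)}(a) \to \sigma_{*i}(a)$ holds for every $a \in S_i$ and every $i \in \I$, which is exactly the claim $\fld(s^n) \to \sigma_*$.

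There is no genuine obstacle here: the lemma is essentially the construction read backwards and functions as a sanity check that the simple melodies have their intended folding. The only mildly nontrivial point is the final-coordinate argument, which relies on $\sigma_{*i}$ being a probability distribution over exactly the enumerated support (so that the leftover mass $1 - \sum_{k<r_i}\sigma_{*i}(a_{ik})$ is well-defined and equals $\sigma_{*i}(a_{ir_i})$). The substantive work --- bounding the approximation error $\epsilon_n$ against arbitrary deviations, including ones whose period differs from $\tau_i(n)$ and interacts non-trivially with the opponent's period --- is deferred to the subsequent lemmas of the counterpoint technique, where the almost-coprimality hypothesis will actually play a role.
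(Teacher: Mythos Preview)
Your proposal is correct and matches the paper's approach; indeed, the paper does not give a separate proof of this lemma at all, writing only ``We directly have the following property'' after noting that \eqref{eq:simple-melody-freq} also holds for $k=r_i$. You have simply spelled out the one-period count and the coordinate-wise limit that the paper leaves implicit.
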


By this lemma, the main crux of the proof for the main result is to establish the NE convergence in the unfolding, that is, $\lim_{n\to\infty}f_{\G^\uf(n)}\left(s^\s{n}\right)=0$.

\subsection{A Reduction to Coprime Cases}\label{subsec:reduce-coprime}
In this part, we prove the sufficiency of \Cref{thm:main-result-almost-coprime} and \Cref{thm:correspondence}. Our method here is actually to reduce the general cases to the coprime cases. 

\subsubsection{Coprime Cases}
We first show that any NE in any finite normal-form game can be realized by the folding strategy when two players have coprime rationality levels at each time. Formally, we prove the following proposition.

\begin{proposition}\label{prop:coprime-approaching}
Suppose that $\{\tau_i(n)\}_{n\geq 1}$ is the rationality level of player $i$ ($i=1,2$) so that $\tau_1(n),\tau_2(n)$ are coprime for each $n$. Then strategy profiles $s^\s{n}$ constructed by the simple melody $m_i^\s{n}$ in \eqref{eq:simple-melody} converge to a Nash equilibrium in the unfolding game, and the equilibrium payoff is the same as the payoff from $\sigma^*$ in the original game.
\end{proposition}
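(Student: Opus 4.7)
The plan is to exploit the coprimality of $\tau_1(n)$ and $\tau_2(n)$ via the Chinese Remainder Theorem to reduce unfolding-game payoff computations against the simple melody to ordinary mixed-strategy payoffs in the original game $\G$. First, I would establish a \emph{mixing lemma}: when $\gcd(\tau_1(n), \tau_2(n)) = 1$, as $j$ ranges over $\{1, \ldots, \tau_1(n)\tau_2(n)\}$, the pair $(j \bmod \tau_1(n),\, j \bmod \tau_2(n))$ hits every element of $[\tau_1(n)] \times [\tau_2(n)]$ exactly once. Combined with \Cref{lemma:only-need-repeated} applied to the LCM $\tau(n) = \tau_1(n)\tau_2(n)$, this means that for \emph{any} strategy $s_1' \in A_1^{\uf}(n)$ played against the fixed melody $m_2^{(n)}$, each note of $s_1'$ encounters each note of $m_2^{(n)}$ exactly once over one piece.

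From this combinatorial fact, the joint empirical action distribution over a piece factorizes as the product of the two marginal foldings, so the unfolding-game payoff collapses to an ordinary expected payoff:
\[ u_1^{\uf}(s_1', m_2^{(n)}) = u_1\bigl(\fld(s_1'),\, \mu_2^{(n)}\bigr), \qquad \mu_2^{(n)} := \fld(m_2^{(n)}). \]
In particular, the payoff depends on $s_1'$ only through its folding. Since $u_1$ is linear in its first argument, the maximum over $s_1'$ is attained at a pure action---available regardless of $\tau_1(n)$---so
\[ \max_{s_1' \in A_1^{\uf}(n)} u_1^{\uf}(s_1', m_2^{(n)}) = \max_{a \in S_1} u_1(a, \mu_2^{(n)}). \]

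By \Cref{lemma:simple-approach}, $\mu_2^{(n)} \to \sigma_2^*$, so continuity of $u_1$ forces the right-hand side to converge to $\max_{a \in S_1} u_1(a, \sigma_2^*) = u_1(\sigma^*)$, the last equality being the best-response property of the NE $\sigma^*$. A parallel factorization at the equilibrium profile gives $u_1^{\uf}(s^{(n)}) = u_1(\mu_1^{(n)}, \mu_2^{(n)}) \to u_1(\sigma^*)$. Hence player~1's maximum deviation gain tends to zero, and a symmetric argument handles player~2, producing a common $\epsilon_n \to 0$ bound. This yields both the Nash-equilibrium convergence and the payoff equality.

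The main obstacle I expect is the mixing lemma itself in a form strong enough to apply \emph{uniformly} across all deviations $s_1'$, rather than only simple melodies: one must check that the CRT bijection genuinely induces a product factorization of the joint empirical distribution of $(s_1'(j \bmod \tau_1(n)),\, m_2^{(n)}(j \bmod \tau_2(n)))$ over $j \in \{1, \ldots, \tau_1(n)\tau_2(n)\}$. Once that is in hand, the rest is routine continuity together with the best-response property of $\sigma^*$, and the observation that the pure-action maximizer always lies in $A_1^{\uf}(n)$ obviates any delicate rational-approximation argument for the deviating mixed strategy.
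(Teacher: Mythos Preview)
Your proposal is correct and follows essentially the same route as the paper: the paper packages your ``mixing lemma'' and product factorization into \Cref{lemma:coprime-cal-f_val}, whose proof is precisely the CRT/chord-counting argument you sketch, yielding $u^\uf(s)=u(\fld(s))$ and $f_{\G^\uf}(s)=f_{\G}(\fld(s))$ for any melodies. The proposition then follows from \Cref{lemma:simple-approach} and continuity of $u$ (hence of $f_\G$), exactly as you outline; your worry about uniformity over deviations is resolved by the same chord-bijection argument, and your observation that the pure best response lies in $A_1^\uf(n)$ matches the paper's choice $s_i^*=(a_i^*)^\infty$.
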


The key property of coprime cases, as demonstrated in \Cref{sec:proof-idea}, is that the payoff and the approximation of a strategy profile in the unfolding are identical to those of its folding in the original game.

\begin{lemma}\label{lemma:coprime-cal-f_val}
Suppose that $\tau_1$ and $\tau_2$ are the coprime rationality levels of players $1$ and $2$, respectively. Consider the unfolding $\G^\uf$ of $\G$. Then, for any strategy profile $s$ of $\G^\uf$ constructed by melodies $m_1$ and $m_2$, we have
\[u^\uf(s)=u(\fld(s))\quad\text{and}\quad f_{\G^\uf}(s)=f_{\G}(\fld(s)).\]
\end{lemma}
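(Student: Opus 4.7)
The plan is to exploit the coprimality $\gcd(\tau_1,\tau_2)=1$ through a Chinese Remainder bijection, which factorises the time-averaged payoff into a tensor product and thereby identifies it with the expected payoff under the folding mixed strategies. The second identity then follows from a short comparison of deviation spaces.

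For the first identity, I would set $\tau=\lcm(\tau_1,\tau_2)=\tau_1\tau_2$ and invoke \Cref{lemma:only-need-repeated} to write
\[
u_i^\uf(s)=\frac{1}{\tau_1\tau_2}\sum_{j=1}^{\tau_1\tau_2}u_i(s_1(j),s_2(j))
\]
over one aligned block (after an irrelevant prefix). The decisive step is that, by coprimality, the map $j\mapsto (j\bmod\tau_1,\,j\bmod\tau_2)$ is a bijection from $\{1,\dots,\tau_1\tau_2\}$ onto $\{0,\dots,\tau_1-1\}\times\{0,\dots,\tau_2-1\}$, so each pair of notes $(m_1(k_1),m_2(k_2))$ appears exactly once in the sum. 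Regrouping by action pair and using the folding count $\mu_i(a)=|\{k:m_i(k)=a\}|/\tau_i$ from \Cref{lemma:only-need-repeated}, I obtain
\[
u_i^\uf(s)=\frac{1}{\tau_1\tau_2}\sum_{k_1=1}^{\tau_1}\sum_{k_2=1}^{\tau_2}u_i(m_1(k_1),m_2(k_2))=\sum_{a_1\in S_1}\sum_{a_2\in S_2}u_i(a_1,a_2)\mu_1(a_1)\mu_2(a_2)=u_i(\fld(s)).
\]

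For the equality of maximum deviations, I would observe that any alternative $s_i'\in A_i^\uf$ still has period $\tau_i$, coprime to $\tau_{-i}$, so the first identity applied to $(s_i',s_{-i})$ gives $u_i^\uf(s_i',s_{-i})=u_i(\fld(s_i'),\fld(s_{-i}))$. Since $\fld(s_i')\in\Delta(S_i)$, this yields $\max_{s_i'\in A_i^\uf}u_i^\uf(s_i',s_{-i})\leq\max_{\sigma_i'\in\Delta(S_i)}u_i(\sigma_i',\fld(s_{-i}))$. The reverse inequality exploits that best responses in a finite game can always be taken pure: picking $a^*\in\arg\max_{a\in S_i}u_i(a,\fld(s_{-i}))$, the constant sequence with repeating block $(a^*,\dots,a^*)$ of length $\tau_i$ lies in $A_i^\uf$, folds to $\delta_{a^*}$, and by the first identity attains the maximal mixed-strategy payoff against $\fld(s_{-i})$. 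Subtracting the already-established identity $u_i^\uf(s)=u_i(\fld(s))$ from both suprema then delivers $f_{\G^\uf}(s)=f_\G(\fld(s))$.

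The main obstacle is really the Chinese Remainder bijection, because it is exactly what lets a time-average over joint plays decouple into a product of per-player frequencies; this is precisely where coprimality is essential, and without it the pairs $(m_1(k_1),m_2(k_2))$ would be enumerated with unequal multiplicities and the tensor factorisation would fail. Everything else is bookkeeping: that pure-strategy deviations embed in $A_i^\uf$ as constant sequences, and that best responses in finite normal-form games can always be chosen pure.
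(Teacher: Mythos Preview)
Your proposal is correct and follows essentially the same route as the paper: the paper's ``every chord occurs exactly once'' argument is precisely the Chinese Remainder bijection you invoke, and both proofs then regroup by action pairs to obtain $u_i^\uf(s)=u_i(\fld(s))$. For the deviation identity, the paper argues note-by-note that each $m_i(j)$ effectively faces the mixed strategy $\mu_{-i}$, while you more cleanly reapply the first identity to any $s_i'\in A_i^\uf$ and then realise the pure best response as a constant sequence; both arguments land on the same constant-sequence deviator, so the difference is purely presentational.
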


\begin{proof}
    Since $\tau_1, \tau_2$ are coprime, by the property of greatest common divisor, the length $\tau$ of piece $p$ is exactly $\tau_1\tau_2$. The key observation is that every possible chord occurs exactly once in a piece. More precisely, consider the chord $c_{j,k}$ formed by the $j$-th note $m_1(j)$ of player $1$ and the $k$-th note $m_2(k)$ of player $2$. Then we claim that $c_{j,k}$ occurs exactly once in $p$. This is, in fact, a basic group-theoretic property of the multiplicative group of residue classes. Below, we present an elementary proof for completeness.

    First, $c_{j,k}$ must occur at most once. Suppose this is not the case; then there are two time steps $t_1<t_2$ such that $1\leq t_i\leq\tau$, and $p(t_1)$ and $p(t_2)$ are the same chord. Then we must have $\tau_i\mid t_2-t_1$ for $i=1,2$. Then, by the definition of least common multiple, $\tau\mid t_2-t_1$. However, $1\leq t_2-t_1\leq \tau-1$, which is impossible.
    
    Then we show that $c_{j,k}$ must occur at least once. This follows by basic counting: There are $\tau$ chords in $p$, and there are at most $\tau_1\tau_2=\tau$ possible different chords in $p$, each occurring at most once. Thus, every possible chord must occur once.
    
    Now we know the chord structure in the piece $p$. Then, by \Cref{lemma:only-need-repeated}, we only need to calculate $u_i^\uf\left(s\right)$ as the average payoff in the piece $p$. That is, the average payoff of all chords:
    \[
    u_i^\uf\left(s\right)=\frac{1}{\tau}\sum_{j=1}^{\tau_1}\sum_{k=1}^{\tau_2}u_i(m_1(j),m_2(k)).
    \]
    Enumerate the actions of player $i$ as $a_{i1},a_{i2},\dots,a_{ij_i}$. Now we re-index the sum by actions, and we have:
    \[
    u_i^\uf\left(s\right)=\sum_{j=1}^{j_1}\sum_{k=1}^{j_2}\frac{|t\in[1,\tau_1]:m_1(t)=a_{1j}|}{\tau_1}\cdot\frac{|t\in[1,\tau_2]:m_2(t)=a_{2k}|}{\tau_2}\cdot u_i(a_{1j},a_{2k}).
    \]
    Then, by \Cref{lemma:only-need-repeated}, the above sum is exactly $u_i(\fld(s))$.
    
    Now we consider $f_\G$ and $f_{\G^\uf}$. Let $\mu=\fld(s)$. By a more careful inspection, when we calculate $u_1^\uf(s)$, every note $m_1(j)$ is \emph{as if} facing a mixed strategy $\mu_2$. That is, when we choose $m_1(j)=a_1'$, player $1$'s payoff in the unfolding contributed by $m_1(j)$ is:
    \[
    \frac{1}{\tau}\sum_{k=1}^{\tau_2}u_i(a_1',m_2(k))=\frac{1}{\tau_1}u_1(a_1',\mu_2).
    \]
    Alternatively, when we change $m_1(j)$ to $a_1'$, the payoff change in the unfolding is:
    \[
    \frac{1}{\tau}\sum_{k=1}^{\tau_2}(u_i(a_1',m_2(k))-u_i(m_1(j),m_2(k)))=\frac{1}{\tau_1}(u_1(a_1',\mu_2)-u_1(m_1(j),\mu_2)).
    \]

    The payoff change caused by the deviation of each note is independent. Thus to calculate the maximum extra gain of player $1$ from deviations, it suffices to choose every note as the best response $a^*_1$ against $\mu_2$. Similarly, we can choose the best response $a_2^*$ of player $2$ against $\mu_1$. These lead to strategies in unfolding: $s_i^*=(a_i^*)^\infty$. Then we have
    \begin{align*}
        f_{\G^\uf}(s)&=\max_{i=1,2}\{u_i^\uf(s^*_i,s_{-i})-u_i^\uf(s_i,s_{-i})\}\\
                    &=\max_{i=1,2}\{u_i(a^*_i,\mu_{-i})-u_i(\mu_i,\mu_{-i})\}\\
                    &=f_\G(\mu).\qedhere
    \end{align*}
\end{proof}

Now we are ready to prove \Cref{prop:coprime-approaching}.
\begin{proof}[Proof of \Cref{prop:coprime-approaching}]
    Let $\mu^\s{n}=\fld\left(s^\s{n}\right)$. By \Cref{lemma:only-need-repeated}, 
    \[\mu^\s{n}_i(a_{ik})=\frac{\nu_{i,k}(n)}{\tau_i(n)}.\]
    By \eqref{eq:simple-melody-freq},
    \[\lim_{n\to\infty}\norm{\mu^\s{n}-\sigma_*}=0.\]

    Note that the functions $u$ are continuous. Thus, $\lim_{n\to\infty} u(\mu^\s{n}) = u(\sigma_*)$. The result then directly follows from \Cref{lemma:coprime-cal-f_val}.
\end{proof}

\subsubsection{General Cases}
Now we turn to the general cases. To reduce the general cases to the coprime cases, we need to ``bundle up'' notes. Let us again consider a simple scenario of the Matching Pennies game. Suppose player 1's level of almost-rationality is $\{2k\}_{k\geq 1}$ and player 2's is $\{2k+2\}_{k\geq 1}$. Then if we view every two notes as a bundle, forcing player 1 and player 2 to choose the same action in a bundle, then they are acting \emph{as if} with levels of almost-rationality of $\{k\}_{k\geq 1}$ and $\{k+1\}_{k\geq 1}$. Now the situation becomes exactly the coprime one.

Formally, we introduce the concept of \emph{bundles}.
\begin{definition}[bundles]
Suppose two players have rationality levels $\tau_1$ and $\tau_2$. They choose melodies $m_1$ and $m_2$, respectively. Let $\rho=\gcd(\tau_1,\tau_2)$. A \emph{bundle} of player $i$ is a subsequence of $m_i$ in the form $m_i(1+k\cdot \rho)m_i(2+k\cdot \rho)\dots m_i(\rho+k\cdot \rho)$ for some $k\geq 0$.
\end{definition}

Now, we need to establish a result similar to \Cref{lemma:coprime-cal-f_val}. To do so, we need a concept resembling folding as follows.

\begin{definition}[bundle folding]
    Suppose two players have rationality levels $\tau_1$ and $\tau_2$. They choose melodies $m_1$ and $m_2$, respectively. Let $\rho=\gcd(\tau_1,\tau_2)$. The \emph{bundle folding} of $m_i$ (or a strategy constructed by $m_i$) is defined to be a series of $\rho$ mixed strategies $\{\mu_{j,i}\}_{j=1}^\rho$ in $\G$ such that
    \[\mu_{j,i}(a)=\frac{1}{\tau_i/\rho}|\{t\in[0,\tau_i/\rho-1]:m_i(j+t\cdot \rho)=a\}|.\]
    Moreover, we call $\mu_j=(\mu_{j,1},\mu_{j,2})$ \emph{the $j$-th bundle profile}.
\end{definition}

Intuitively, bundle folding is the result of folding strategies bundle by bundle. 

Let $f^i_\G$ is the maximum extra payoff of player $i$ in $\G$ by deviating from the current strategy. Then we have the following lemma analogous to \Cref{lemma:coprime-cal-f_val}.
\begin{lemma}\label{lemma:no-coprime-cal-f-val}
Suppose that $\tau_1, \tau_2$ are the rationality levels of players $1$ and $2$, respectively. Consider the unfolding $\G^\uf$ of $\G$. Let $\rho=\gcd(\tau_1,\tau_2)$. Take any strategy profile $s$ of $\G^\uf$ constructed by melodies $m_1$ and $m_2$. Suppose the bundle profiles of the two players are $\mu_1,\mu_2,\dots,\mu_\rho$, respectively. Then we have
\[u_i^\uf(s)=\frac{1}{\rho}\sum_{j=1}^\rho u_i(\mu_j).\]
Moreover,
\[f_{\G^\uf}(s)=\frac{1}{\rho}\max_{i=1,2}\left\{\sum_{j=1}^\rho f^i_\G(\mu_j)\right\}.\]
\end{lemma}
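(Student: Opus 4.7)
The plan is to reduce the general case to the coprime case (\Cref{lemma:coprime-cal-f_val}) by decomposing the piece into $\rho=\gcd(\tau_1,\tau_2)$ interleaved sub-games indexed by residue classes modulo $\rho$. Set $\tau_i'=\tau_i/\rho$, so $\gcd(\tau_1',\tau_2')=1$ and the piece has length $\tau=\tau_1'\tau_2'\rho$. For each $j\in\{1,\dots,\rho\}$, collect the time steps $T_j=\{j+k\rho : 0\le k\le \tau_1'\tau_2'-1\}\subseteq[1,\tau]$. Because player $i$'s melody has period $\tau_i=\tau_i'\rho$, along $T_j$ the player cycles through the $\tau_i'$ sub-notes $m_i(j),m_i(j+\rho),\dots,m_i(j+(\tau_i'-1)\rho)$, whose frequency vector is exactly the bundle folding $\mu_{j,i}$ by definition. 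Thus sub-game $j$ is isomorphic to a coprime unfolding game with rationality levels $(\tau_1',\tau_2')$, sub-melodies as above, and folding $\mu_j=(\mu_{j,1},\mu_{j,2})$.

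First, I would apply \Cref{lemma:coprime-cal-f_val} inside each sub-game to conclude that the average payoff restricted to $T_j$ equals $u_i(\mu_j)$. Since the $\rho$ sets $T_j$ partition $[1,\tau]$ and each has size $\tau_1'\tau_2'=\tau/\rho$, combining them via \Cref{lemma:only-need-repeated} gives
\[
u_i^\uf(s)=\frac{1}{\tau}\sum_{j=1}^\rho \sum_{t\in T_j}u_i(s(t))=\frac{1}{\tau}\sum_{j=1}^\rho \tau_1'\tau_2'\cdot u_i(\mu_j)=\frac{1}{\rho}\sum_{j=1}^\rho u_i(\mu_j),
\]
which is the first identity.

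The formula for $f_{\G^\uf}(s)$ rests on the key structural observation that a deviation by player $i$ in the unfolding is simply a new melody $m_i'$ of length $\tau_i$, and the positions $\{j+k\rho:0\le k<\tau_i'\}$ for different $j$ partition $\{1,\dots,\tau_i\}$. Hence player $i$ may independently prescribe a sub-melody in each of the $\rho$ sub-games, and — because the game history along $T_j$ depends only on the sub-melodies indexed by $j$ — each choice affects only the contribution coming from $T_j$. Therefore the best deviation maximizes within each sub-game independently, and by the coprime lemma the resulting per-sub-game extra gain is $f^i_\G(\mu_j)$. Weighting each sub-game by its time fraction $1/\rho$ and taking the maximum over $i$ yields $f_{\G^\uf}(s)=\frac{1}{\rho}\max_{i=1,2}\sum_{j=1}^\rho f^i_\G(\mu_j)$. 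The main obstacle is the independence claim for deviations: one must verify cleanly that altering $m_i$ at some position $j+k\rho$ changes payoffs only at times in $T_j$, which follows from the periodicity of both players' melodies modulo $\rho$ and the disjointness of the $T_j$'s. Once this is in place, the rest is bookkeeping on top of \Cref{lemma:coprime-cal-f_val}.
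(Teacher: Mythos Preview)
Your proposal is correct and follows essentially the same approach as the paper: both exploit the fact that positions in the piece sharing a residue class $j$ modulo $\rho$ (equivalently, the $j$-th notes of all bundles) form a sub-problem with coprime periods $(\tau_1/\rho,\tau_2/\rho)$, whose folding is precisely the bundle profile $\mu_j$. The paper carries out the note-by-note computation directly (re-deriving the coprime counting argument rather than invoking \Cref{lemma:coprime-cal-f_val} as a black box), while you package the same structure as an explicit reduction to \Cref{lemma:coprime-cal-f_val} on each of the $\rho$ interleaved sub-games; the independence-of-deviations argument is identical in substance.
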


\begin{proof}
    We calculate the average payoff note by note in the bundle. Consider player $1$ and her first bundle $b_{11}=m_1(1)\dots m_1(\rho)$.\footnote{We make this assumption to simplify the notation. The discussion below can be applied to any bundle in an obvious way.} We calculate payoffs contributed by $m_1(j)$. Let $\rho(n)=\gcd(\tau_1(n),\tau_2(n))$. By the same argument in the proof of \Cref{lemma:coprime-cal-f_val}, every bundle of player $1$ encounters every bundle of player $2$ exactly once in the piece $p$. Then the notes of player $2$ that $m_1(1)$ encounters are
    \begin{equation}
        m_2(j),m_2(j+\rho),\dots,m_2\left(j+\left(\tau_2/\rho-1\right)\cdot \rho\right),  \label{eq:encounter-seq}
    \end{equation}
    each encountered once.
    
    Thus, the payoff contributed by $m_1(1)$ to $u_1^\uf(s)$ is
    \[\frac{1}{\tau_1}\sum_{k=0}^{\tau_2/\rho-1}\frac{1}{\tau_2/\rho}\cdot u_1\left(m_1(j),m_2(j+k\cdot\rho)\right).\]
    Similar to the proof of \Cref{lemma:coprime-cal-f_val}, we can re-index the sum by actions. By the definition of bundle folding, we have 
    \[\frac{1}{\tau_1}\sum_{k=0}^{\tau_2/\rho-1}\frac{1}{\tau_2/\rho}\cdot u_1\left(m_1(j),m_2(j+k\cdot\rho)\right)=\frac{1}{\rho}\cdot\frac{1}{\tau_1/\rho}u_1(m_1(j),\mu_{j,2}).\]
    Summing over all notes in a bundle for player $1$. The left-hand side is the payoff contributed by the $j$-th position of all bundles. Omitting the first term $1/\rho$, the right-hand side is exactly $u_1(\mu_{j,1},\mu_{j,2})$. Thus, the actual right-hand side is exactly $(1/\rho)\cdot u_1(\mu_j)$.

    Now, sum up with $j$ from $1$ to $\rho$. Then the equation for the payoff becomes
\[u_1^\uf(s)=\frac{1}{\rho}\sum_{j=1}^\rho u_1(\mu_j).\]
That is the desired property in the lemma.

    Now we turn to $f_\G$. The proof is almost identical to that in \Cref{lemma:coprime-cal-f_val}. We only sketch the proof here. The main difference is that now players are as if playing game $\G$ for $\rho$ times, with each bundle profile occurring once. Thus, when we consider their best responses, each position of the bundle should be considered separately. For the bundle profile $j$, the maximum extra payoff of player $i$ in $\G$ it contributes is exactly $f^i_\G(\mu_j)$. Then, to maximize her payoff in the unfolding $\G^\uf$, player $i$ should choose the best response $a_{i,j}^*$ at the $j$-th position of all bundles. Her best response against $s_{-i}$ in $\G^\uf$ is then $(a_{i1}a_{i2}\dots a_{i\rho})^\infty$.
\end{proof}

Crucially, we need to control the \emph{aggregate} $f^i$ in the bundle for each player $i$. To compute $\mu_j$, we calculate the frequency vector of each player in the bundle. For player $i$, we calculate the frequency vector using $\tau_i/\rho$ notes, which determines the minimum precision $\mu_j$ can achieve. Here is where the almost-coprime condition comes into play. To make $\mu_j$ arbitrarily close to an NE, we need to make $\rho/\tau_i$ arbitrarily close to $0$ for each player $i$. This is precisely the almost-coprime condition!

Note that the calculation of bundle folding $\mu_j$ does not involve all notes, as the folding in \Cref{lemma:coprime-cal-f_val} does. Instead, it ``samples'' notes and calculates their frequency vector. Thus, we need to show that the simple melody construction can still work even if we only consider these sampled notes.

To see how this can be done, we take an NE $\sigma^*$ in the general form. Then, when the rationality levels of both players are almost coprime, we have $\rho/\tau_i\to 0$. However, the frequency of each action $a_i$ of player $i$ tends to $\sigma^*_i(a_i)$, which is a positive constant. Thus, the sampled notes are quite uniformly distributed. This is the key observation that makes the simple melody construction work. A quite intuitive illustration is given in \Cref{fig:almost-coprime-uniform-sample}.

\begin{figure}[ht]
    \centering
    \includegraphics[width=\linewidth]{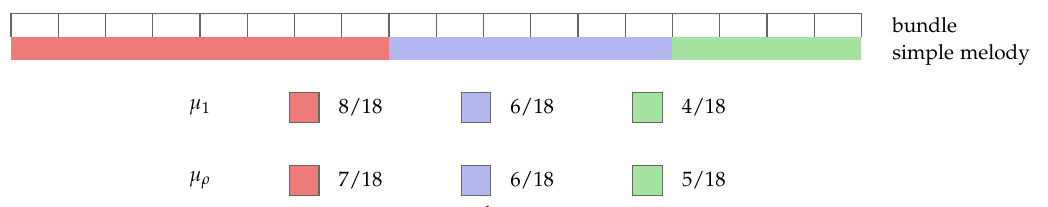}
    \caption{Illustration of the simple melody construction in almost-coprime cases. We show the frequency vector of a player in $\mu_1$ and $\mu_\rho$. Note that they are only slightly different.}
    \label{fig:almost-coprime-uniform-sample}
\end{figure}

Such a ``uniform sampling'' property can be formalized as follows.

\begin{lemma}\label{lemma:coprime-num-hetero-bundle}
Suppose two players have almost-coprime almost-rationality levels $\{\tau_i(n)\}_{n\geq 1}$, $i=1,2$. A strategy profile $s^\s{n}$ is constructed by the simple melody $m_i^\s{n}$ in \eqref{eq:simple-melody}. Suppose there are $d_i(n)$ bundles in $m_i^\s{n}$. For every $a_{ij}$ in $\supp(\sigma_{*i})$, let $d_{ij}(n)$ denote the number of bundles composed solely of $a_{ij}$. Then, for sufficiently large $n$,
\[\tau_i(n)-\sum_{a_{ij}\in\supp(\sigma_{*i})}d_{ij}(n)\leq r_i-1.\]
Here, $r_i$ is the number of actions in $\supp(\sigma_{*i})$. Equivalently, there are at most $r_i-1$ heterogeneous bundles in $m_i^\s{n}$. Moreover, 
\[\lim_{n\to\infty}\frac{d_{ij}(n)}{d_i(n)}=\sigma_{*i}(a_{ij})>0.\]
\end{lemma}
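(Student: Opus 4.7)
The plan is to exploit the block structure of the simple melody $m_i^{(n)} = a_{i1}^{\nu_{i,1}(n)} a_{i2}^{\nu_{i,2}(n)} \cdots a_{ir_i}^{\nu_{i,r_i}(n)}$, which partitions the positions $\{1,2,\dots,\tau_i(n)\}$ into $r_i$ contiguous \emph{blocks} of constant action. The key observation is that a bundle (of length $\rho(n) = \gcd(\tau_1(n),\tau_2(n))$) is heterogeneous precisely when it straddles one of the $r_i-1$ \emph{block boundaries}. If no bundle can contain two boundaries, then heterogeneous bundles are in bijection with a subset of these boundaries, giving at most $r_i-1$ of them; the ratio $d_{ij}(n)/d_i(n)$ then follows from a straightforward counting of the bundles fully contained in block $j$.

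For the bound on heterogeneous bundles, I would combine \eqref{eq:simple-melody-freq}, which gives $\nu_{i,k}(n)/\tau_i(n) \to \sigma_{*i}(a_{ik}) > 0$ for each $k \in \{1,\dots,r_i\}$, with almost-coprimality, which gives $\rho(n)/\tau_i(n) \to 0$. Multiplying these yields $\nu_{i,k}(n)/\rho(n) \to \infty$, so for all sufficiently large $n$ every block has length strictly greater than $\rho(n)$. Then any two consecutive block boundaries are separated by more than $\rho(n)$ positions and cannot lie in the same length-$\rho(n)$ bundle. Hence each of the $r_i-1$ boundaries is contained in exactly one bundle, and the number of heterogeneous bundles --- which is $d_i(n) - \sum_j d_{ij}(n)$, matching the ``equivalently'' phrasing of the lemma --- is at most $r_i-1$.

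For the limit, I would count the bundles fully contained in block $j$, which occupies the positions $\left(\sum_{k<j}\nu_{i,k}(n)\right)+1$ through $\sum_{k\le j}\nu_{i,k}(n)$. Since bundle endpoints are equally spaced at multiples of $\rho(n)$, the number of bundles entirely inside this block is $\nu_{i,j}(n)/\rho(n) + O(1)$, where the $O(1)$ absorbs at most two partial bundles at the two ends. Dividing by $d_i(n) = \tau_i(n)/\rho(n)$ yields
\[\frac{d_{ij}(n)}{d_i(n)} = \frac{\nu_{i,j}(n)}{\tau_i(n)} + O\!\left(\frac{\rho(n)}{\tau_i(n)}\right),\]
and the convergence $\nu_{i,j}(n)/\tau_i(n) \to \sigma_{*i}(a_{ij})$ together with the error $\rho(n)/\tau_i(n) \to 0$ yields the desired limit. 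Strict positivity is then immediate from $a_{ij} \in \supp(\sigma_{*i})$.

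The main obstacle is the ``one boundary per bundle'' step, which genuinely requires \emph{both} hypotheses working together: almost-coprimality alone only tells us $\rho(n)$ is small relative to $\tau_i(n)$ globally, while positivity of each $\sigma_{*i}(a_{ij})$ ensures that \emph{every} individual block length $\nu_{i,j}(n)$ grows at rate $\Theta(\tau_i(n))$ and therefore eventually dominates $\rho(n)$. Without the positivity restriction to $\supp(\sigma_{*i})$, a block could shrink at a rate comparable to or faster than $\rho(n)$, allowing two boundaries to fall inside a single bundle and spoiling the clean count; this is precisely why the lemma restricts attention to actions in the support.
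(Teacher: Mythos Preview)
Your proposal is correct and follows essentially the same route as the paper: both arguments first establish $\nu_{i,k}(n)/\rho(n)\to\infty$ by factoring it as $(\nu_{i,k}(n)/\tau_i(n))\cdot(\tau_i(n)/\rho(n))$, then use this to ensure each block is long enough that a bundle can straddle at most one block boundary, and finally obtain the limit from the sandwich $\nu_{i,j}(n)/\rho(n)-O(1)\le d_{ij}(n)\le \nu_{i,j}(n)/\rho(n)$ divided by $d_i(n)=\tau_i(n)/\rho(n)$. The only cosmetic difference is that the paper takes the slightly stronger threshold $\nu_{i,j}(n)/\rho(n)\ge 3$ where you use $\nu_{i,j}(n)>\rho(n)$, but this does not change the argument.
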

    
\begin{proof}
    By symmetry, we only prove the lemma for player $1$. By the definition of almost coprime, $d_1(n)=\tau_1(n)/\gcd(\tau_1(n),\tau_2(n))\to\infty$ as $n\to\infty$. We can suppose $n$ is large so that every action $a_{1j}$ supported in $\sigma_{1*}$ occupies at least three bundles of notes. That is, $\nu_{1,j}(n)/\gcd(\tau_1(n),\tau_2(n))\geq 3$. This is possible as we argue below. 

    Note that $\nu_{1,j}(n)/\tau_1(n)\to\sigma_{*1}(a_{1j})(>0)$ as $n\to\infty$. Thus, 
    \[\frac{\nu_{1,j}(n)}{\gcd(\tau_1(n),\tau_2(n))}=\underbrace{\frac{\nu_{1,j}(n)}{\tau_1(n)}}_{\to\sigma_{*1}(a_{1j})>0}\cdot\underbrace{\frac{\tau_1(n)}{\gcd(\tau_1(n),\tau_2(n))}}_{\to\infty}\to\infty.\]
    Therefore, for large enough $n$, $\nu_{1,j}(n)/\gcd(\tau_1(n),\tau_2(n))\geq 3$ for every $j$.
    
    By the division algorithm, action $a_{11}$ occupies $\lfloor\nu_{1,j}(n)/\gcd(\tau_1(n),\tau_2(n))\rfloor\geq 3$ bundles. The remaining unbundled notes of action $a_{11}$ are then bundled with action $a_{12}$. By our choice of $n$, the leftover $a_{12}$ notes can still form at least one bundle. We bundle them into as many bundles as possible. Then the remaining unbundled notes of $a_{12}$ are bundled with action $a_{13}$. This procedure continues until all notes are in a bundle.
    
    Consider the subsequence $a_{1j}^{\nu_{1,j}(n)}$ of $m^\s{n}$. Note that by our construction, the only heterogeneous bundles must be at the front or the tail of $a_{1,j}^{\nu_{1,j}(n)}$. More precisely, the heterogeneous bundles have the form $a_{1j}^{n_j}a_{1,j+1}^{n_{j+1}}$, with each $j$ forming one. Thus, there are at most $r_i-1$ heterogeneous bundles.

    Now we turn to the second part of the lemma. The only heterogeneous bundles are at the front or the tail of $a_{1,j}^{\nu_{1,j}(n)}$, which at most consume $2\gcd(\tau_1(n),\tau_2(n))$ notes. Thus we have that
    \[\frac{\nu_{1,j}(n)}{\gcd(\tau_1(n),\tau_2(n))}-2\leq d_{1,j}(n)\leq \frac{\nu_{1,j}(n)}{\gcd(\tau_1(n),\tau_2(n))}.\]
    Consequently, by the definition of $d_1(n)$, we have
    \begin{gather*}
        \frac{1}{d_1(n)}\cdot\left(\frac{\nu_{1,j}(n)}{\gcd(\tau_1(n),\tau_2(n))}-2\right)\leq \frac{d_{1,j}(n)}{d_1(n)}\leq \frac{1}{d_1(n)}\cdot\left(\frac{\nu_{1,j}(n)}{\gcd(\tau_1(n),\tau_2(n))}\right)\\
        \iff
        \frac{\nu_{1,j}(n)}{\tau_1(n)}-\frac{2}{d_1(n)}\leq \frac{d_{1,j}(n)}{d_1(n)}\leq \frac{\nu_{1,j}(n)}{\tau_1(n)}.
    \end{gather*}
    Letting $n\to\infty$, the lemma follows by \eqref{eq:simple-melody-freq} and the definition of almost coprime.
\end{proof}

Using this lemma, we can show that the bundle profiles can uniformly converge to an NE.

\begin{corollary}\label{cor:coprime-bundle-NE-approaching}
Suppose two players have almost-coprime levels of rationality $\{\tau_i(n)\}_{n\geq 1}$, $i=1,2$. Let $\rho(n)=\gcd(\tau_1(n),\tau_2(n))$. Bundle strategy profiles $\mu_j^\s{n}$ ($j=1,\dots,\rho(n)$) are constructed by the simple melody $m_i^\s{n}$ in \eqref{eq:simple-melody}. Then we have
\[\lim_{n\to\infty}\max_{1\leq j\leq\rho(n)}\norm{\mu_j^\s{n}-\sigma_*}=0.\]
\end{corollary}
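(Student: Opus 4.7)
The plan is to bound the discrepancy $|\mu_{j,i}^\s{n}(a)-\sigma_{*i}(a)|$ uniformly in the column index $j$, by directly invoking the two conclusions of \Cref{lemma:coprime-num-hetero-bundle}. By the definition of bundle folding, $\mu_{j,i}^\s{n}(a)\cdot d_i(n)$ is exactly the number of bundles of player $i$ whose $j$-th position equals action $a$. I would partition the $d_i(n)$ bundles of player $i$ into homogeneous bundles (those composed of a single repeated action) and heterogeneous bundles, and track how each type contributes to the column-$j$ frequency.

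First I would observe that a homogeneous $a_{ik}$-bundle contributes $1$ to the count $\mu_{j,i}^\s{n}(a_{ik})\cdot d_i(n)$ for every choice of $j$, while a heterogeneous bundle contributes $0$ or $1$ depending on what it holds at position $j$. Since \Cref{lemma:coprime-num-hetero-bundle} guarantees at most $r_i-1$ heterogeneous bundles, this gives the $j$-uniform sandwich
$$\frac{d_{ik}(n)}{d_i(n)}\le \mu_{j,i}^\s{n}(a_{ik})\le \frac{d_{ik}(n)}{d_i(n)}+\frac{r_i-1}{d_i(n)}.$$
For actions $a\notin\supp(\sigma_{*i})$, the simple melody never plays them, so $\mu_{j,i}^\s{n}(a)=0=\sigma_{*i}(a)$ trivially, and no error is incurred on these coordinates.

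Next I would take $n\to\infty$. The almost-coprime hypothesis gives $d_i(n)=\tau_i(n)/\rho(n)\to\infty$, so the additive error $(r_i-1)/d_i(n)$ vanishes uniformly in $j$; and the second assertion of \Cref{lemma:coprime-num-hetero-bundle} gives $d_{ik}(n)/d_i(n)\to\sigma_{*i}(a_{ik})$. Summing the finitely many vanishing per-action errors over both players yields $\max_{1\le j\le \rho(n)}\norm{\mu_j^\s{n}-\sigma_*}\to 0$. The argument is short because \Cref{lemma:coprime-num-hetero-bundle} has already absorbed the combinatorial heavy lifting; I do not anticipate a serious obstacle, but the one point to be careful about is ensuring the homogeneous/heterogeneous decomposition yields a bound that is genuinely uniform in $j$, since that uniformity is exactly what permits the outer $\max$ to inherit the convergence.
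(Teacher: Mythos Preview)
Your proposal is correct and follows essentially the same approach as the paper. Both arguments establish the uniform-in-$j$ sandwich $\dfrac{d_{ik}(n)}{d_i(n)}\le \mu_{j,i}^\s{n}(a_{ik})\le \dfrac{d_{ik}(n)}{d_i(n)}+\dfrac{r_i-1}{d_i(n)}$ by counting homogeneous versus heterogeneous bundles via \Cref{lemma:coprime-num-hetero-bundle}, then send $n\to\infty$ using $d_i(n)\to\infty$ and $d_{ik}(n)/d_i(n)\to\sigma_{*i}(a_{ik})$; your framing of the counting as ``bundles whose $j$-th position equals $a$'' is a clean way to see why the bound is genuinely uniform in $j$.
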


\begin{proof}
    Consider any $\epsilon>0$. By \Cref{lemma:coprime-num-hetero-bundle}, there exists $N\in\N$ such that for all $a_{ik}$ in $\supp(\sigma_{*i})$ and $n\geq N$,
    \begin{equation}
        \left|\frac{d_{ik}(n)}{d_i(n)}-\sigma_{*i}(a_{ik})\right|<\epsilon,\label{eq:bundle-folding-approx}
    \end{equation}
    and $\tau_i(n)-\sum_{a_{ij}\in\supp(\sigma_{*i})}d_{ij}(n)\leq r_i-1$. Here we use $d_i$ and $d_{ik}$ as stated in \Cref{lemma:coprime-num-hetero-bundle}.
    
    By the definition of almost coprime, $d_i(n)\to\infty$ as $n\to\infty$. We pick $N'\geq N$ such that for all $n\geq N'$, $d_i(n)>(r_i-1)/\epsilon$.
    
    Then by \Cref{lemma:coprime-num-hetero-bundle} and the definition of bundle folding, for all $j\in[1,\rho(n)]$,
    \begin{align}
        &d_{ik}(n)\leq |\{t\in[0,d_i(n)-1]:m_i(j+t\cdot\rho(n))=a_{ik}\}|\leq d_{ik}(n)+r_i-1\notag\\
        \iff&\frac{d_{ik}(n)}{d_i(n)}\leq\mu_{j,i}^\s{n}(a_{ik})\leq\frac{d_{ik}(n)}{d_i(n)}+\frac{r_i-1}{d_i(n)}.\label{eq:est-bundle-folding}
    \end{align}
    Combining \eqref{eq:bundle-folding-approx} and \eqref{eq:est-bundle-folding}, we have
    \begin{align*}
        -\epsilon<\mu_{j,i}^\s{n}(a_{ik})-\sigma_{*i}(a_{ik})<\epsilon+\frac{r_i-1}{d_i(n)}<2\epsilon.
    \end{align*}
    Since this holds for all $i$, all $a_{ik}$, and all $j$, we have
    \[\max_{1\leq j\leq\rho(n)}\norm{\mu_j^\s{n}-\sigma_*}<2\epsilon.\]
    Since $\epsilon$ is arbitrary, the lemma follows by the definition of a limit.
\end{proof}

Now we are ready to present the following proposition, parallel to \Cref{prop:coprime-approaching}, which immediately proves the sufficiency of \Cref{thm:main-result-almost-coprime}.

\begin{proposition}\label{prop:2player-approaching}
Suppose that $\{\tau_i(n)\}_{n\geq 1}$ are the almost-coprime rationality levels of player $i$ ($i=1,2$). Then the strategy profiles $s^\s{n}$ constructed by the simple melody $m_i^\s{n}$ in \eqref{eq:simple-melody} converge to a NE in the unfolding game with the same payoff as $\sigma^*$ in the original game.
\end{proposition}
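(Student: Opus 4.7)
The plan is to combine the bundle-level calculation from \Cref{lemma:no-coprime-cal-f-val} with the uniform bundle-convergence from \Cref{cor:coprime-bundle-NE-approaching}, and then finish via the continuity of the payoff and best-response-gap functions on the mixed-strategy simplex. Concretely, set $\rho(n) = \gcd(\tau_1(n),\tau_2(n))$ and let $\mu_1^{(n)},\dots,\mu_{\rho(n)}^{(n)}$ denote the bundle profiles produced by the simple melodies $m_i^{(n)}$. By \Cref{lemma:no-coprime-cal-f-val},
\[
u_i^\uf(s^{(n)}) \;=\; \frac{1}{\rho(n)}\sum_{j=1}^{\rho(n)} u_i(\mu_j^{(n)}), \qquad f_{\G^\uf(n)}(s^{(n)}) \;=\; \frac{1}{\rho(n)}\max_{i=1,2}\sum_{j=1}^{\rho(n)} f^i_{\G}(\mu_j^{(n)}).
\]

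Next I would invoke \Cref{cor:coprime-bundle-NE-approaching} to obtain $\lim_{n\to\infty}\max_{1\le j\le\rho(n)}\norm{\mu_j^{(n)}-\sigma_*}=0$, so every bundle profile approaches $\sigma_*$ uniformly in $j$. For the payoff claim, continuity of $u_i$ on the product of simplices gives $\sup_{j}|u_i(\mu_j^{(n)})-u_i(\sigma_*)|\to 0$, and averaging over the $\rho(n)$ bundles (which only shrinks the supremum bound) yields $u_i^\uf(s^{(n)})\to u_i(\sigma_*)$. For the equilibrium claim, observe that $f^i_\G(\mu) = \max_{a\in S_i} u_i(a,\mu_{-i}) - u_i(\mu)$ is continuous in $\mu$ as the maximum of finitely many continuous functions minus a continuous function, and vanishes at $\sigma_*$ because $\sigma_*$ is a Nash equilibrium of $\G$. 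Therefore $\sup_{j} f^i_\G(\mu_j^{(n)})\to 0$ by uniform convergence, and the same averaging argument gives $f_{\G^\uf(n)}(s^{(n)})\to 0$, which is exactly the condition of \Cref{def:NE-for-almost-rational}.

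There is no genuine obstacle left here: all the hard combinatorial and number-theoretic work has been absorbed into \Cref{lemma:no-coprime-cal-f-val} (the bundle identities) and \Cref{cor:coprime-bundle-NE-approaching} (the uniform approximation of each bundle profile to $\sigma_*$). The only minor subtlety worth flagging explicitly is that one must bound the \emph{supremum} over bundles rather than just the average, because otherwise a handful of bad bundles could in principle keep $\sum_j f^i_\G(\mu_j^{(n)})/\rho(n)$ away from zero; uniform convergence in \Cref{cor:coprime-bundle-NE-approaching} is precisely what rules this out, which is why its ``uniform'' statement was proved in that strong form.
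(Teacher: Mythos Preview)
Your proposal is correct and follows essentially the same route as the paper: invoke \Cref{lemma:no-coprime-cal-f-val} for the bundle identities, apply \Cref{cor:coprime-bundle-NE-approaching} to get uniform convergence of all bundle profiles to $\sigma_*$, and then use continuity of $u_i$ and $f^i_\G$ (the latter vanishing at $\sigma_*$) to conclude. Your explicit remark that one needs the supremum bound over bundles rather than just an average is exactly the subtlety the paper flags in a footnote.
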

\begin{proof}
Let $\rho(n)=\gcd(\tau_1(n),\tau_2(n))$. Let $\mu_j^\s{n}$ ($j=1,\dots,\rho(n)$) be the bundle profiles of $s^\s{n}$. By \Cref{cor:coprime-bundle-NE-approaching},
\[\lim_{n\to\infty}\max_{1\leq j\leq\rho(n)}\norm{\mu_j^\s{n}-\sigma_*}=0.\]
Note that the functions $u$ and $f_\G^i$ are continuous. Thus, for player $i\in\{1,2\}$, we have
\[\lim_{n\to\infty} \max_{1\leq j\leq\rho(n)}\left|u_i\left(\mu_j^\s{n}\right)-u_i(\sigma_*)\right|=0\quad\text{and}\quad\lim_{n\to\infty} \max_{1\leq j\leq\rho(n)}f_\G^i\left(\mu_j^\s{n}\right)=0.\]
The result follows directly from \Cref{lemma:no-coprime-cal-f-val}.\footnote{The convergence results must be uniform in $j$ (i.e., the notation $\max_{1\leq j\leq\rho(n)}$) to make a uniform estimation of the limit. We here omit the process of estimation. The reader may refer to the proof of \Cref{cor:coprime-bundle-NE-approaching} for a similar argument.}
\end{proof}

\subsection{The Modified Matching Pennies Games}\label{subsec:modified-matching-pennies}
In this part, we prove the necessity of \Cref{thm:main-result-almost-coprime}. To prove these results, we need to construct a game with a non-approachable NE. Surprisingly, the constructions are all very simple modifications of the Matching Pennies game \eqref{eq:matching-pennies-RC}.

Now we present the modified Matching Pennies game (after proper scaling and shifting). Given any $\delta\in[0,1]$, the desired game $\G_\delta$ has payoff matrices
\begin{equation}\label{eq:modified-matching-pennies-RC}
    R:\begin{array}{c|cc}
            &H&T  \\\hline
         H& \delta+1&1\\
         T&2\delta&\delta+1\\
    \end{array}\qquad
    C:\begin{array}{c|cc}
            &H&T  \\\hline
         H& 2\delta&\delta+1\\
         T&\delta+1&1\\
    \end{array}
\end{equation}

Now we prove a continuity property of the function $f_\G(r,c)$, which is the maximum extra payoff of player $1$ in game $\G_\delta$ by deviating from strategy profile $(r,c)$.
\begin{lemma}\label{lemma:reverse-f_G-Lipschitz}
Game \eqref{eq:modified-matching-pennies-RC} has a unique NE $r^*=c^*=(\delta,1-\delta)^\T$. Moreover, for any $\delta\in[0,1]$ and $\epsilon>0$, there exists some $\epsilon_0>0$ such that $f_\G(r,c)\geq \epsilon_0$ whenever $\norm{(r,c)-(r^*,c^*)}\geq \epsilon$.
\end{lemma}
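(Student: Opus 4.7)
The plan is to split the argument into two stages: first, establish that $(r^*, c^*) = ((\delta, 1-\delta)^\T, (\delta, 1-\delta)^\T)$ is the unique Nash equilibrium of $\G_\delta$ (at least for $\delta \in (0,1)$); then, given uniqueness and continuity, use a compactness argument to extract the uniform lower bound $\epsilon_0$.

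For the first stage, a direct bilinear computation yields the deviation gaps
\[u_1(H,c) - u_1(T,c) = c_H - \delta,\qquad u_2(r,T) - u_2(r,H) = r_H - \delta.\]
These give the pure best-response correspondences: player $1$ strictly prefers $H$ when $c_H > \delta$, strictly prefers $T$ when $c_H < \delta$, and is indifferent iff $c_H = \delta$, and symmetrically for player $2$ with respect to $r_H$. An inspection of the four pure profiles then rules each one out as a Nash equilibrium for $\delta \in (0,1)$, so any NE must have at least one player mixing strictly, which by the indifference conditions forces both $c_H = \delta$ and $r_H = \delta$ and pins down $(r^*, c^*)$ as the unique NE.

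For the second stage, I would invoke compactness. The mixed-strategy space $\Delta_{S_1} \times \Delta_{S_2}$ (a product of two probability simplices) is compact, and the deviation function $f_\G$ --- taken here to be the maximum across both players of the best-response regret --- is continuous, since each player's regret is a max over finitely many bilinear expressions in $(r,c)$ minus another bilinear expression. For any $\epsilon > 0$, the set $K_\epsilon = \{(r,c) \in \Delta_{S_1} \times \Delta_{S_2} : \norm{(r,c) - (r^*, c^*)} \geq \epsilon\}$ is closed in the compact ambient space, hence compact. By uniqueness of the NE, $f_\G(r,c) > 0$ throughout $K_\epsilon$, and the extreme value theorem delivers a positive minimum $\epsilon_0 := \min_{K_\epsilon} f_\G > 0$, which is the desired constant.

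The hard part is really the uniqueness step, specifically at the corners $\delta \in \{0,1\}$ where strict inequalities in the best-response structure collapse and a continuum of equilibria can appear --- for instance, at $\delta = 0$ every profile of the form $(r, (0,1))$ is an NE, which would violate the Lipschitz bound if read literally. If the lemma must be interpreted verbatim for all $\delta \in [0,1]$, a clean workaround is to bypass compactness and compute $f_\G$ in closed form: $f_\G^1(r,c) = (1 - r_H)(c_H - \delta)^+ + r_H(\delta - c_H)^+$ and symmetrically $f_\G^2(r,c) = c_H(r_H - \delta)^+ + (1 - c_H)(\delta - r_H)^+$; a case split on whether $|c_H - \delta|$ or $|r_H - \delta|$ dominates $\norm{(r,c) - (r^*, c^*)}$ then yields an explicit lower bound for $f_\G$, and as a bonus produces a quantitative rate ($\epsilon_0$ roughly linear in $\epsilon$ in the interior regime) that the soft compactness argument does not provide.
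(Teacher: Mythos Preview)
Your argument is essentially the same as the paper's: verify $(r^*,c^*)$ is the unique NE via the indifference/best-response structure, then extract $\epsilon_0$ by taking the minimum of the continuous function $f_\G$ over the compact set $K_\epsilon=\{(r,c):\norm{(r,c)-(r^*,c^*)}\geq\epsilon\}$. Your explicit deviation-gap formulas $u_1(H,c)-u_1(T,c)=c_H-\delta$ and $u_2(r,T)-u_2(r,H)=r_H-\delta$ are correct and make the uniqueness step a bit more transparent than the paper's version.

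Your observation about the endpoints $\delta\in\{0,1\}$ is accurate and more careful than the paper, which glosses over this: at $\delta=0$ every profile $(r,(0,1))$ is indeed an NE, so the lemma as literally stated fails there. However, your proposed ``workaround'' via the closed-form expressions $f_\G^1(r,c)=(1-r_H)(c_H-\delta)^++r_H(\delta-c_H)^+$ and its analogue does not actually rescue the boundary cases. At $\delta=0$ and $(r,c)=((1/2,1/2),(0,1))$ one has $c_H=\delta$ and hence $f_\G^1=0$, while $f_\G^2=c_H(r_H-\delta)^+=0$ as well, so $f_\G=0$ despite $\norm{(r,c)-(r^*,c^*)}>0$. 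No workaround can exist because the statement is simply false at the endpoints; fortunately, the paper only invokes the lemma for small positive $\delta$, so the interior case you (and the paper) prove is all that is needed.
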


\begin{proof}
    It is not hard to check that no pure strategy can appear in any NE. Thus, both players must use non-pure strategies. When player 2 chooses $(\delta,1-\delta)$, player 1's expected payoffs are 
    \[\begin{array}{c|c}
        &\delta H+ (1-\delta) T\\\hline
        H&\delta(\delta+1)+(1-\delta)\cdot 1=\delta^2+1\\
        T&\delta\cdot 2\delta+(1-\delta)\cdot(\delta+1)=\delta^2+1\\
    \end{array}\]
    Thus, both $H$ and $T$ are best responses for player 1. Therefore, $r^*$ is the best response against $c^*$. Similarly, $c^*$ is the best response against $r^*$. Thus, $(r^*,c^*)$ is an NE. When player 2 deviates from $c^*$, player 1's best response is solely $H$ or solely $T$. Only when $c=c^*$ can player 1 have a non-pure best response. A similar argument holds for the other case. Thus, $(r^*,c^*)$ is the unique NE.

    Now, suppose $\norm{(r,c)-(r^*,c^*)}\geq \epsilon$. Consider the compact set $K=\{(r,c)\in\Delta_2\times\Delta_2:\norm{(r,c)-(r^*,c^*)}\geq \epsilon\}$. Since $f_\G(r,c)>0$ when $\norm{(r,c)-(r^*,c^*)}\geq \epsilon$ and $f_\G$ is continuous, we have that on $K$, $f_\G$ attains its minimum $\epsilon_0>0$, as desired.\footnote{To avoid calculations, here we use a basic fact from topology that any continuous function attains its minimum on a compact set (a standard result from topology). See e.g. \cite{munkresTopology2013}. Of course, one can also prove it by direct calculation.}
\end{proof}

Now, using the concept of bundle profiles, we can prove the necessity of \Cref{thm:main-result-almost-coprime}. We actually prove the following proposition.

\begin{proposition}\label{prop:2player-nonapproach}
Suppose that $\{\tau_i(n)\}_{n\geq 1}$ are the rationality levels of player $i$ ($i=1,2$) such that there exists a small enough $\delta>0$ satisfying
\[\limsup_{n\to\infty}\frac{\gcd(\tau_1(n),\tau_2(n))}{\min\{\tau_1(n),\tau_2(n)\}}\geq3\delta.\]
Consider the NE $\sigma_*=(r^*,c^*)$ with $r^*=c^*=(\delta,1-\delta)$ in game $\G_\delta$ of \eqref{eq:modified-matching-pennies-RC} and a sequence of its unfoldings $\G^\uf_\delta(n)$ with rationality levels $\tau_1(n)$ and $\tau_2(n)$, respectively. Then there exists some $\epsilon_0>0$ such that for any sequence of strategy profiles $s^\s{n}$ of $\G^\uf_\delta(n)$, we have
\[\limsup_{n\to\infty}f_{\G_\delta^\uf}\left(s^\s{n}\right)\geq\epsilon_0.\]
\end{proposition}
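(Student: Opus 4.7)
The plan is to combine three ingredients: the bundle-profile formula for $f_{\G^\uf}$ from \Cref{lemma:no-coprime-cal-f-val}, the reverse-Lipschitz bound around the unique NE of $\G_\delta$ in \Cref{lemma:reverse-f_G-Lipschitz}, and a discretization pigeonhole forcing every bundle profile of every strategy profile to stay bounded away from $\sigma_*$.

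First, by the limsup hypothesis (taking any constant strictly below $3\delta$, e.g.\ $2\delta$), I extract a subsequence $\{n_k\}$ on which $\rho(n_k)/\min\{\tau_1(n_k),\tau_2(n_k)\} \geq 2\delta$. A pigeonhole then lets me pass to a further subsubsequence on which the minimum is always attained by the same player; without loss of generality say $\tau_1(n_k) \leq \tau_2(n_k)$, so that $d_1(n_k) := \tau_1(n_k)/\rho(n_k) \leq 1/(2\delta)$ is a positive integer bounded above by a constant. (The opposite case is handled identically by swapping the roles of the two coordinates below.)

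Next comes the core discretization. Fix any strategy profile $s^{(n_k)}$ of $\G_\delta^\uf(n_k)$ and let $\{\mu_j^{(n_k)}\}_{j=1}^{\rho(n_k)}$ denote its bundle profiles. By the definition of bundle folding, $\mu_{j,1}^{(n_k)}(H)$ is a frequency computed from exactly $d_1(n_k)$ notes, so it lies in the discrete grid $\{0, 1/d_1(n_k), 2/d_1(n_k), \ldots, 1\}$. Since $1/d_1(n_k) \geq 2\delta$, the two grid points closest to $\delta$ are $0$ (at distance $\delta$) and $1/d_1(n_k)$ (at distance $\geq \delta$). Therefore $\norm{\mu_j^{(n_k)} - \sigma_*} \geq |\mu_{j,1}^{(n_k)}(H) - \delta| \geq \delta$ for every bundle index $j$.

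Finally, \Cref{lemma:reverse-f_G-Lipschitz} applied with $\epsilon = \delta$ produces a uniform constant $\epsilon_0 > 0$ (depending only on $\delta$) with $f^1_\G(\mu_j^{(n_k)}) \geq \epsilon_0$ for all $j,k$. Summing over $j$ and invoking \Cref{lemma:no-coprime-cal-f-val} (the $\max_{i=1,2}$ there dominates the player-$1$ sum):
$$f_{\G_\delta^\uf}\!\left(s^{(n_k)}\right) \;\geq\; \frac{1}{\rho(n_k)}\sum_{j=1}^{\rho(n_k)} f^1_\G\!\left(\mu_j^{(n_k)}\right) \;\geq\; \epsilon_0,$$
so $\limsup_n f_{\G_\delta^\uf}(s^{(n)}) \geq \epsilon_0$, as required. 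The hard part is the discretization step: it exploits the engineering of $\G_\delta$, whose unique NE places mass $\delta$ on $H$, together with $\delta$ chosen small enough that $\delta < 1/(2 d_1)$ for every bundle-count $d_1$ compatible with the hypothesis. This rigidity of $\delta$ relative to the coarse bundle grid is exactly what \Cref{lemma:reverse-f_G-Lipschitz} then converts into a uniform lower bound on the deviation gain.
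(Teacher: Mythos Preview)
Your overall strategy matches the paper's: extract a subsequence on which $\rho/\tau_1$ is bounded below, use the coarseness of the bundle-folding grid for player~1 to force $\norm{\mu_j-\sigma_*}\geq\delta$ uniformly in $j$, and then convert this via \Cref{lemma:reverse-f_G-Lipschitz} and \Cref{lemma:no-coprime-cal-f-val} into a lower bound on $f_{\G^\uf_\delta}$.

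There is, however, a genuine gap in the final step. \Cref{lemma:reverse-f_G-Lipschitz} gives $f_\G(\mu_j)=\max_{i=1,2}f^i_\G(\mu_j)\geq\epsilon_0$, \emph{not} $f^1_\G(\mu_j)\geq\epsilon_0$. Indeed, $f^1_\G(r,c)$ vanishes whenever $c=c^*$ (player~1 is then indifferent between $H$ and $T$), and nothing you have established rules this out: you only controlled $\mu_{j,1}$, while the denominator $d_2(n_k)=\tau_2(n_k)/\rho(n_k)$ can be arbitrarily large, so $\mu_{j,2}(H)$ may equal or approximate $\delta$ to any precision. Hence your claimed inequality $\frac{1}{\rho}\sum_j f^1_\G(\mu_j)\geq\epsilon_0$ does not follow from the lemma.

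The formula in \Cref{lemma:no-coprime-cal-f-val} has the maximum \emph{outside} the sum, $\max_i\sum_j f^i_\G(\mu_j)$, whereas what you actually have is a lower bound on $\sum_j\max_i f^i_\G(\mu_j)$; these are not comparable in the direction you need. The paper bridges this with a pigeonhole: since each $j$ lies in $J_1(n)\cup J_2(n)$ where $J_i(n)=\{j:f^i_\G(\mu_j)\geq 2\epsilon_0\}$, one has $|J_i(n)|\geq\rho(n)/2$ for some $i$, so that player's sum alone is at least $\rho(n)\epsilon_0$, yielding $f_{\G^\uf_\delta}(s^{(n)})\geq\epsilon_0$. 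Insert this step (after doubling your $\epsilon_0$) and your argument is complete.
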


\begin{proof}

Let $\rho(n)=\gcd(\tau_1(n),\tau_2(n))$. By taking a subsequence, suppose without loss of generality that $\tau_1(n)\leq\tau_2(n)$ and there exists the limit
\[\lim_{n\to\infty}\frac{\rho(n)}{\tau_1(n)}\geq 3\delta.\]
Take a sequence of strategy profiles $s^\s{n}$ of $\G^\uf_\delta(n)$. Let $\mu_j^\s{n}$ ($j=1,\dots,\rho(n)$) be the bundle profiles of $s^\s{n}$.
    
First, we claim that for a large enough $n$, the minimum nonzero component of any bundle profile $\mu_{j,1}^\s{n}$ is greater than $2\delta$. Recall the definition of bundle folding:
\[\mu_{j,1}^\s{n}(a)=\frac{1}{\tau_1(n)/\rho(n)}\left|\left\{t\in[0,\tau_1(n)/\rho(n)-1]:m_1^\s{n}(j+t\cdot \rho(n))=a\right\}\right|.\]
Consider $a$ such that $\mu_{j,1}^\s{n}(a)>0$. If we write $\mu_{j,1}^\s{n}(a)=p_n/q_n$ for coprime $p_n$, $q_n$, we must have $q_n \le \tau_1(n)/\rho(n)$, so $\mu_{j,1}^\s{n}(a)\geq 1/q_n\geq\rho(n)/\tau_1(n)$. Since $\lim_{n\to\infty}\rho(n)/\tau_1(n)\geq 3\delta$, for sufficiently large $n$, $\rho(n)/\tau_1(n)>2\delta$, as desired.

A direct consequence of this claim is that for any bundle profile $j$, $\norm{\mu_j^\s{n}-\sigma_*}\geq\delta$. By \Cref{lemma:reverse-f_G-Lipschitz}, there exists some $\epsilon_0>0$ such that for any $j$,
\[f_{\G}\left(\mu_j^\s{n}\right)=\max_{i=1,2}\left\{f^i_{\G}\left(\mu_j^\s{n}\right)\right\}\geq 2\epsilon_0.\]
By the pigeonhole principle, for each player $i \in \{1,2\}$, let 
\[J_i(n) = \left\{j \in \{1, \dots, \rho(n)\} : f^i_\G(\mu_j^\s{n}) \ge 2\epsilon_0 \right\}.\] Then for at least one player $i$, we must have $|J_i(n)| \ge \rho(n)/2$. According to \Cref{lemma:no-coprime-cal-f-val},
\[
    f_{\G_\delta^\uf}\left(s^\s{n}\right)=\frac{1}{\rho(n)}\max_{i=1,2}\left\{\sum_{j=1}^{\rho(n)} f^i_{\G}\left(\mu_j^\s{n}\right)\right\}.
\]
Thus, we have
\[f_{\G_\delta^\uf}\left(s^\s{n}\right)\geq\frac{1}{\rho(n)}\cdot\frac{\rho(n)}{2}\cdot(2\epsilon_0) = \epsilon_0.\qedhere\]

\end{proof}
\end{document}